\documentclass[10pt,journal]{IEEEtran} 

\title{Efficient Multi-Robot Exploration with Energy Constraint based on Optimal Transport Theory }

\author{Rabiul Hasan Kabir and Kooktae Lee
\thanks{R. H. Kabir and K. Lee are with the Department of Mechanical Engineering, New Mexico Institute of Mining and Technology, Socorro, NM 87801, USA.
         {\tt\small rabiul.kabir@student.nmt.edu, kooktae.lee@nmt.edu}}
}

\usepackage{amsmath,graphicx,amsfonts,amssymb,epsfig,subfig,mathrsfs,mathtools}
\usepackage{cuted}
\usepackage{arydshln}
\usepackage{blkarray}

\usepackage{color}
\usepackage{multirow}
\usepackage{multicol}
\usepackage{lipsum}
\usepackage{rotating}
\usepackage{graphicx}%
\usepackage{algorithm}
\usepackage{algpseudocode}
\usepackage{cite}
\usepackage{tikz}
\usepackage{cancel}
\usepackage{amsthm}
\usepackage{textcomp}

\DeclareMathOperator*{\minimize}{minimize}

\DeclareMathOperator*{\argmin}{arg\,min}

\newtheorem{assumption}{Assumption}
\newtheorem{theorem}{Theorem}

\newtheorem{proposition}{Proposition}

\parskip=1ex
\allowdisplaybreaks

\newlength\myindent
\setlength\myindent{2em}

\begin{document}
\maketitle

\begin{abstract}
This paper addresses an Optimal Transport (OT)-based efficient multi-robot exploration problem, considering the energy constraints of a multi-robot system. The efficiency in this problem implies how a team of robots (agents) covers a given domain, reflecting a priority of areas of interest represented by a density distribution, rather than simply following a preset of uniform patterns. To achieve an efficient multi-robot exploration, the optimal transport theory that quantifies a distance between two density distributions is employed as a tool, which also serves as a means of performance measure. The energy constraints for the multi-robot system is then incorporated into the OT-based multi-robot exploration scheme. 

The proposed scheme is decoupled from robot dynamics, broadening the applicability of the multi-robot exploration plan to heterogeneous robot platforms.
Not only the centralized but also decentralized algorithms are provided to cope with more realistic scenarios such as communication range limits between agents. 
To measure the exploration efficiency, the upper bound of the performance is developed for both the centralized and decentralized cases based on the optimal transport theory, which is computationally tractable as well as efficient.
The proposed multi-robot exploration scheme is also applicable to a time-varying distribution, where the spatio-temporal evolution of the given reference distribution is desired.
To validate the proposed method, multiple simulation results are provided.
\end{abstract}

\section{Introduction}
A multi-robot exploration problem is investigated in this paper. The multi-robot exploration in this context implies that a team of robots works together either to search for targets in the domain or to survey areas of interest. Therefore, a multi-robot exploration problem refers to how a team of autonomous robots covers the given domain. Illustrative examples for robot exploration problems are search and rescue, surveillance, reconnaissance, and site monitoring. 
It is worth noting that robot exploration problems are different from conventional robot path planning problems \cite{ryan2008exploiting,kala2012multi,ulusoy2013optimality} or 
robot coordination problems \cite{burgard2005coordinated,sheng2006distributed,wurm2008coordinated}.
The path planning problem is concerned about how to generate a robot trajectory such that a robot successfully arrives at a given goal point starting from an initial point.
In robot exploration problems, however, a goal point does not exist. Moreover, robot coordination problems are concerned about exploring an unknown environment for the mapping purpose, which is different from multi-robot explorations. In the following, in-depth literature reviews are conducted to provide insights on what studies have been done related to multi-robot explorations.

\subsection{Literature Review:}
Multi-robot exploration problems have been widely investigated over many decades. From a broad point of view, previous works related to multi-robot explorations can be classified into three different areas: multi-robot coverage path planning, multi-robot exploration and search, and ergodic exploration, depending on the method used in their work. 
In the following, their results and contributions to date are briefly reviewed.
\subsubsection{Multi-robot coverage path planning}
Coverage Path Planning (CPP) refers to a method to synthesize a robot path for passing over all points of an area or volume of interest.
Many different approaches have been developed to solve the multi-robot CPP problem, which is further divided into several sub-areas.

The cell decomposition method transforms the obstacle-free space into simple, non-overlapping regions called cells. The union of all the cells can be swept by a robot using simple motions. A lawnmower path is an example of such a simple motion through a zigzag pattern as in the literature \cite{azpurua2018multi}. 
The general cell decomposition technique is applied to the multi-robot case in \cite{xu2014efficient}, \cite{avellar2015multi}, where the generated graph is utilized to divide the sweeps among the team of robots by solving a vehicle routing problem. Recently, the authors in \cite{kapoutsis2017darp} proposed an algorithm that partitions the area of interest fairly among a team of robots considering their initial positions. 
A spanning tree is proposed in \cite{kim2014time} which focuses on time-synchronized coverage control of cooperative mobile robots. 

The geometric based approach is another branch of CPP based on visibility graphs that includes a set
of points and obstacles where the nodes represent the locations, and the edges are line segments that do not
pass through obstacles. This method has been used in many areas such as finding the shortest Euclidean path, and
polygonal area coverage. 
The broadly used geometric-based method in the multi-robot CPP is the Voronoi Diagrams \cite{yazici2013dynamic}, \cite{adaldo2017cooperative}.

Incremental random planners refer to sampling-based methods such as Rapidly exploring Random Trees (RRT) and Probabilistic  Road Map (PRM). These methods have been widely investigated with many variations including ant colony robot motion planning \cite{mohamad2005ant}, exploration of implicit roadmaps in multi-robot motion planning \cite{solovey2015finding}, parallel implementation of the RRT-based motion planner \cite{carpin2002parallel}, mutual information-based multiple autonomous underwater vehicles path planning \cite{cui2015mutual}, a multi-robot system for vacuum cleaning \cite{nikitenko2014multi}, a scalable and informed solution for asymptotically-optimal multi-robot motion planning \cite{shome2020drrt}, to list a few.

Other types of methods related to CPP problems are reward-based \cite{ranjbar2012multi, darrah2017optimized, li2018multi, palacios2017optimal} and Next Best View (NBV) approaches \cite{manjanna2018heterogeneous, miki2018multi, mirzaei2011cooperative, qin2019autonomous}. 
The reward-based approach is to reward agents for desired behavior in multi-robot CPP through a neural network, nature-inspired method, and hybrid algorithms. The advantages of this approach are nonlinear mapping, learning ability, and parallel processing. 
NBV approaches are generally used when no information
about the model exists. These approaches scale better
to complex real-world. 

Although there have been considerable research works related to the multi-robot CPP both in quality and quantity as described above, all of these methods have only focused on the entire coverage of the given domain while not taking into account relative importance or priority of areas of interest, making the multi-robot CPP far from the efficient exploration.

\subsubsection{Multi-robot exploration and search}
Hollinger \cite{hollinger2009efficient} proposed a multi-robot efficient search algorithm for a moving target in an indoor environment based on the Bayesian measurement update model. 
Under the assumption that an environment is known, the problem was formulated to choose multi-robot paths most likely to intersect with the path taken by the target.
The proposed method is then further extended to 
pursuit-evasion and autonomous search application \cite{chung2011search},
multi-robot coordination with periodic connectivity \cite{hollinger2012multirobot},
mapping, localization, and motion planning for multi-robot systems \cite{rone2013mapping},
sampling-based robotic information gathering algorithms \cite{hollinger2014sampling},
adaptive informative path planning \cite{lim2016adaptive}, and
online planning for multi-robot active perception \cite{best2018online}.
In some real applications (e.g., search and rescue), however, the environment may be only partially known or completely unknown, which limits the applicability of the proposed method. Moreover, robot dynamics and sensor models are not incorporated into the plan.  

Particle Swarm Optimization (PSO)-based approaches have been proposed for a multi-robot search algorithm \cite{pugh2007inspiring}. PSO has its basis in computer science to optimize a problem by making particles move around in the search-space according to simple mathematical formulae. PSO methods have been widely investigated and extended for multi-robot exploration and search \cite{li2011swarm, couceiro2011novel, tan2013research, zou2015particle, couceiro2014benchmark, senanayake2016search, kwa2020optimal}.
As is widely known, PSO is metaheuristic and hence, there is no theoretical guarantee for optimality.  The target search algorithm is greedy, implying that it relies only on the sensor measurements between multiple robots.

\subsubsection{Ergodic exploration}
Mathew and Mezi{\'c} \cite{GM-IM:11} addressed a multi-robot exploration problem based on the ergodicity that refers to system characteristics such that the time-averaged dynamics are equal to the given spatial average. In this work, a metric is defined to measure the ergodicity as the difference between the time-averaged multi-robot trajectory and the given spatial distribution. The Fourier basis functions are employed to facilitate the derivation of the ergodic control laws. This method has been further investigated and applied to many other in-depth research works such as 
coverage control of mobile sensors for a search of unknown targets \cite{surana2012coverage},
optimal planning for information acquisition \cite{silverman2013optimal},
trajectory optimization for continuous ergodic exploration \cite{miller2013trajectory},
real-time area coverage and target localization \cite{mavrommati2017real},
ergodic coverage in constrained environments \cite{ayvali2017ergodic},
decentralized ergodic control \cite{abraham2018decentralized},
receding-horizon multi-objective optimization for disaster response \cite{lee2018receding}, and 
ergodic flocking \cite{veitch2019ergodic}.

All of these works rely on the proposed result in \cite{GM-IM:11}, yet it contains the following issues.
The proposed result is developed for the centralized control scheme, which may not be desirable in reality. A computational issue arises in the implementation stage due to infinite numbers of the Fourier basis functions being used in the method \cite{GM-IM:11} and because of that, it is not clear how many Fourier basis functions are appropriate in implementation. 
A reference spatial Probability Density Function (PDF) is given as a static function and hence, it cannot cope with a time-varying scenario. 
Although some of these issues were tackled by several follow-up research works, no previous works have resolved all of them.
Finally, and most importantly, the ergodicity can be achieved only with infinite time, which is the fundamental limitation of the ergodic approach. This problem is fatal as robots have finite energy and hence, the ergodicity will never be attained in practice.

\subsection{Statement of Contributions:} 
Despite numerous research works into the development of multi-robot explorations, they have been limited and thus lacked a  significant contribution to the efficient exploration as explained above. 
In this paper, we propose a new approach for efficient multi-robot explorations based on the optimal transport theory. 
In \cite{kabir2020receding}, a preliminary result was introduced for an efficient single-robot exploration plan. This work has laid the foundation and opened up the possibility to generate an efficient robot trajectory based on the optimal transport theory. The optimal transport is also utilized to measure robot exploration efficiency. 
This preliminary work, however, was developed for a single robot and did not consider the majority of research works investigated in this paper such as multi-robot trajectory generation, non-overlapping issues between multiple robots, a decentralized control scheme, and a time-varying distribution scenario.
Therefore, it is considerably different from what we propose here.

The major contributions of this paper are as follows: 1) A new multi-robot exploration scheme is proposed for efficient explorations of a given domain based on the OT theory; 2) An energy level of a multi-robot system is taken into account in the plan. This is a critical issue as robots have finite energy in practice, which must be reflected in the plan to achieve the efficient exploration; 3) The proposed scheme is decoupled from robot dynamics, enlarging the applicability of the proposed method to various robot systems having heterogeneous platforms; 4) Both centralized and decentralized algorithms are developed to cope with more realistic scenarios, where the communication between agents are limited by some communication range constraints; 5) The proposed method is applicable to a time-varying distribution scenario, which is more desirable when the targets to be detected by a multi-robot system is supposed to move; 6) The performance of the proposed multi-robot exploration algorithm can be measured by the developed method, the upper bound of the optimal transport; and 7) The proposed method is computationally efficient compared to that from the similar research.

\subsection{Paper Outline and Notation:}
The remainder of this article is organized as follows.
Section~\ref{sec: prelim} introduces some preliminaries and problem description. The OT-based efficient exploration scheme for a single agent case is provided in Section \ref{sec: OT-method, single agent}. This result is extended to a multi-agent case, which is presented in Section \ref{sec: OT-method, multiple agent}. Section \ref{sec: time-varying} deals with a time-varying scenario for the given reference distribution. Simulation results are provided in Section \ref{sec: simulations} to support the validity of the proposed methods as well as to compare the performance.

\textit{Notation:} A Set of real and natural numbers, respectively, is denoted by $\mathbb{R}$ and $\mathbb{N}$. Further, $\mathbb{N}_0 = \mathbb{N}\cup\{0\}$. The symbols $\Vert \cdot \Vert$ and $^{T}$, respectively, denote the Euclidean norm and the transpose operator. 
The symbol $\#$ indicates the cardinality of a given set. The variable $t\in\mathbb{N}_0$ is used to denote a discrete time.

\section{Preliminary and Problem Description}\label{sec: prelim}

A given domain that needs to be covered by a multi-robot system may have differences in priority (or importance) and a team of heterogeneous robots is required to explore the domain according to the pre-specified priority of areas. This implies that agents need to spend more time in exploring some high-priority regions while covering low-priority regions with less time.

To achieve this goal, the optimal transport theory is employed as a tool. 
Traditionally, the optimal transport is to seek an optimal solution for a resource allocation problem \cite{villani2008optimal}. This optimal transport problem is formulated by the following Kantorovich's form:
\begin{itemize}
\item Kantorovich Optimal Transport problem:
\begin{align*}
\inf\left\{\int_{X\times Y}c(x,y)d\gamma(x,y)\vert \gamma\in\Gamma(\mu,\nu) \right\},
\end{align*}
\end{itemize}
where $\Gamma(\mu,\nu)$ denotes the collection of all probability measure with marginals $\mu$ on $X$ and $\nu$ on $Y$ and $c(x,y)$ is the distance between $x\in\mathbb{R}^{2}$ and $y\in\mathbb{R}^{2}$ (for two dimensional case).

Using the Euclidean distance $c(x,y) = \Vert x-y \Vert^p$ with $p^{th}$ order ($p\geq 1$), we introduce the Wasserstein distance \cite{villani2008optimal} of order $p$, which has been widely employed to broad dynamical systems including system analysis \cite{lee2014probabilistic}, \cite{lee2015performance}, \cite{lee2015analysis} as well as controller synthesis \cite{lee2014optimal}, \cite{lee2018optimal} problems.
\begin{itemize}
\item Wasserstein distance:
\begin{align*}
W_p(\mu,\nu) := \left(\inf_{\gamma\in\Gamma(\mu,\nu)}\int_{X\times Y} \Vert x - y\Vert^p d\gamma(x,y)\right)^{\frac{1}{p}},
\end{align*}
\end{itemize}
This Wasserstein distance describes the least amount of effort to transform one distribution $\mu$ into another one $\nu$.

The Hitchcock-Koopmans transportation problem is developed for the optimal transport problem in the discrete marginal case \cite{evans1997partial}, where $\mu$ and $\nu$ are represented by particles.
The following linear programming (LP) formulation of the transportation problem is equivalent to the Wasserstein distance in the sample point representation of given distributions.
\begin{itemize}
\item Linear Programming problem: (for $p=1$) 
\begin{equation}\label{eqn: LP}
  \begin{aligned}
    & \underset{\pi_{ij}}{\text{minimize}} & & \sum_{i,j}\pi_{ij}\Vert x_i-y_j \Vert\\
    & \text{subject to} & & \pi_{ij} \geq 0,\\
   	& & & \sum_{j=1}^{N}\pi_{ij} = m(x_i), i=1,2,\ldots,M,\\
	& & & \sum_{i=1}^{M}\pi_{ij} = n(y_j), j=1,2,\ldots,N,
\end{aligned}
\end{equation}
\end{itemize}
where $x_i$ and $y_j$ are the sample point of the ensemble, $m(x_i)$ and $n(y_j)$ are some non-negative constants representing the mass or weight of each particle in the ensemble. The variable $\pi_{ij}$ stands for the transport plan indicating how much weight has to be delivered from $x_i$ to $y_j$.
The optimal transport  plan $\pi_{ij}^*$ is to seek an optimal solution for the minimum effort to transport the weights. 

\begin{figure*}[t]
\begin{center}
\subfloat[given spatial distribution]{\includegraphics[scale=0.42]{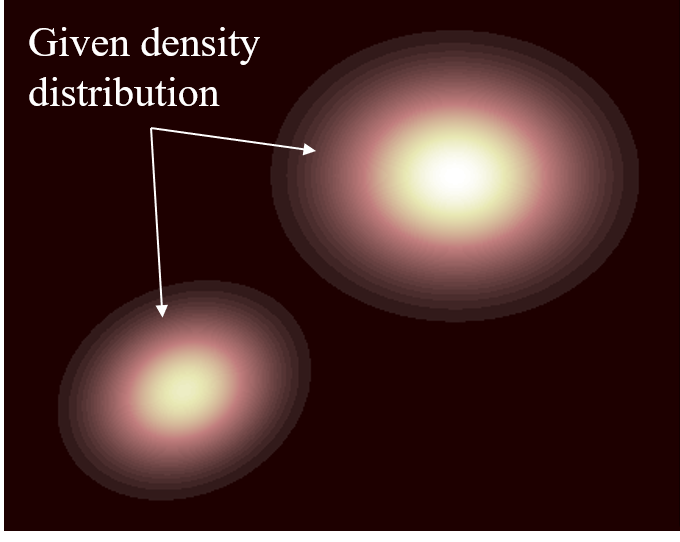}}\quad
\subfloat[sampling]{\includegraphics[scale=0.42]{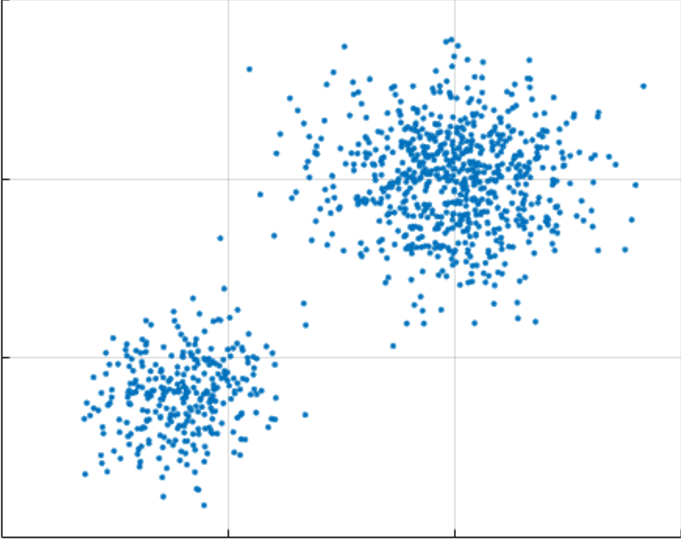}}\quad
\subfloat[efficient robot exploration ]{\includegraphics[scale=0.42]{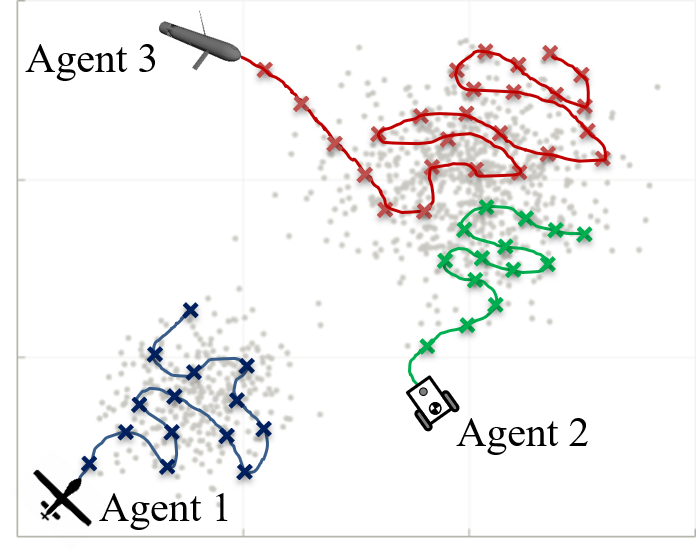}}
\end{center}
\caption{The procedure to generate the efficient multi-robot trajectory using the optimal transport theory}\label{fig: ergodic traj procedure}
\end{figure*}

In the multi-robot trajectory generation problem, the set of robot points $\{x_i\}$ need to be planned as these points are not determined yet. The Wasserstein distance in the LP form \eqref{eqn: LP} will be used as a tool to measure the difference between the two ensembles, one from the robot trajectories, $\{x_i\}$, and another from the given reference distribution, $\{y_i\}$.
Thus, the goal of this research is to generate the multi-robot trajectories such that the distribution formed by $\{x_i\}$ gets closer to the given reference distribution $\{y_i\}$, resulting in the efficient multi-robot exploration.
The schematic is presented in Fig. \ref{fig: ergodic traj procedure}. For the given spatial distribution (Fig. \ref{fig: ergodic traj procedure} (a)), the sampling process is carried out (Fig. \ref{fig: ergodic traj procedure} (b)), and then the multi-robot trajectories are generated to match $\{x_i\}$ and $\{y_j\}$ (Fig. \ref{fig: ergodic traj procedure} (c)).
From the mathematical perspective, it is described to generate robot trajectories $\{x_i\}$ such that $\displaystyle \sum_{i=1}\sum_{j=1}\pi_{ij}\Vert x_i - y_j\Vert \rightarrow 0$ with the given constraints in \eqref{eqn: LP}.

A simple and naive approach to achieve this goal is to make $\{x_i\}$ identical to $\{y_j\}$. However, this approach is infeasible due to the following reasons:
\begin{enumerate}
\item[1)] The robot's motion constraints may restrict the robot from visiting the sample point $y_j$.
\item[2)] Considering the energy limit for each robot, the total number of robot points denoted by $M$ is finite and thus, $M$ may be smaller than the total number of sample points given by $N$.
\item[3)] If $M\neq N$, it is not possible to match the robot points with the sample points.
\item[4)] For $M=N$, it may take an enormous amount of time for robots to explore the domain while following the generated trajectory connecting all the sample points if $N$ is very large. 
\end{enumerate}

In the following, the OT-based robot exploration scheme is provided to resolve all of the issues stated above.
\section{OT-based Efficient Exploration: A Single-Agent Case}\label{sec: OT-method, single agent}

This section provides a key idea for the efficient robot exploration based on the optimal transport theory. A single-agent scenario will be introduced first and then, it will be extended to the multi-robot scenario in the following section.

For an exploration mission with a given reference PDF, the domain needs to be covered by an agent that has finite energy. This energy constraint can be transformed into the total number of robot points, $M$, in discrete time. For a given number of robot points $M\in\mathbb{N}$, all the points are equally weighted by $m(x_i)$, where $m(x_i) = \frac{1}{M}$, $\forall i$. Similarly, for $N\in\mathbb{N}$ numbers of sample points representing the spatial distribution, each sample point $y_j$ has a uniform weight in the beginning, given by $n_0(y_j)=\frac{1}{N}$. Here, the weight $n_t(y_j)$ of a sample point $y_j$ decreases with time (discrete time $t\in\mathbb{N}_0$, particularly) as the robot explores the domain, making the weight $n_t(y_j)$ be a function of time $t$.

We consider that initially (at $t=0$), all the robot points $\{x_i\}_{i=1}^{M}$ are accumulated at the initial robot position $x_0$. As the robot changes its position from $x_0$ to $x_1$ in the next discrete-time step (the details and method will be introduced later), the weight assigned for the new position $x_1$ becomes $m(x_1) = \frac{1}{M}$. All the remaining weights $\frac{M-1}{M}$ for the undetermined robot points $\{x_i\}_{i=2}^{M}$ are carried by the robot and these undetermined points are considered to be located at the current robot position $x_1$. 
The following assumption is proposed to generalized the above description.

\begin{assumption}\label{assump: remaining weight}
Given the robot position $x_t$ at any discrete-time $t\in\mathbb{N}_0$, the weight $m(x_i)$, $i=1,2,\ldots, t$, for the past robot points is evenly given by $\frac{1}{M}$. The undetermined future robot points $\{x_i\}_{i=t+1}^{M}$ are all accumulated in the current robot position, $x_{t}$, which has remaining weights $\sum_{i=t+1}^{M}\left(\frac{1}{M}\right) = \frac{M-t}{M}$.
\end{assumption}

Based on Assumption \ref{assump: remaining weight}, the proposed efficient exploration scheme is introduced in the sequel.
Fig. \ref{fig: layer} depicts the multi-layer schematic of the proposed method to realize the efficient exploration.
\begin{figure}[h!]
\begin{center}
\includegraphics[scale=0.43]{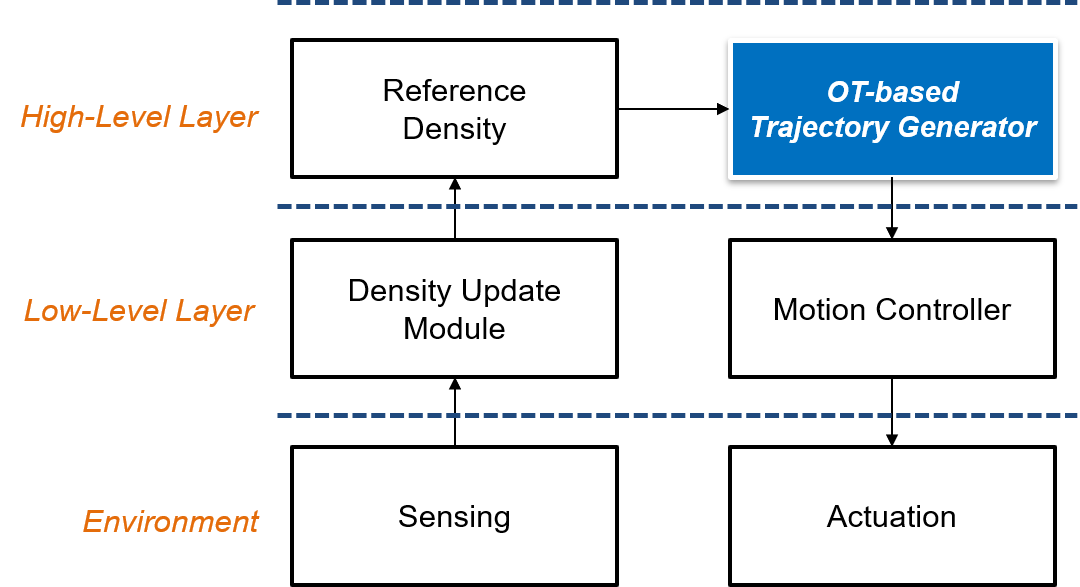}
\caption{Schematic of multiple layers with OT-based trajectory generator placed on high-level layer}\label{fig: layer}
\end{center}
\end{figure}
The major work is to develop the OT-based trajectory generator in the high-level layer. This trajectory generator receives the given reference PDF information and then, generates a trajectory for the robot to follow. The motion controller in the low-level layer is decoupled from the trajectory generator, which broadens the applicability of the proposed method to various robots having heterogeneous platforms. This implies that the proposed approach is platform-free and hence, the efficient exploration can be achieved in collaboration between various robot platforms such as unmanned aerial vehicles, ground robots, and unmanned underwater vehicles. As the robot explores the given domain, it collects data using onboard sensors from an environment. The density update module in the low-level layer performs this task. Finally, the reference PDF in the high-level layer will be updated through the density update module.

The optimal transport problem focuses on determining the non-negative optimal transport plan $\pi_{ij}^{\star}$ for given Euclidean distances $\Vert x_i - y_j\Vert$. Unlike conventional optimal transport problems in the LP form \eqref{eqn: LP}, the efficient robot exploration problem includes two parameters, $\pi_{ij}$ and $x_i$, both as the decision variables. 
This renders the efficient robot exploration problem much more difficult than the LP problem. In the following, a two-stage approach for the OT-based trajectory generator is proposed to solve this problem.


\subsection{Methodology: A Two-Stage Approach}

The proposed method consists of two stages: the next goal point determination and the weight update in a receding-horizon fashion. 
In the first stage, the robot determines where to go by considering a fixed number of sample points within a certain range. Then, the robot moves toward the next goal point. Once the robot arrives at a new position (which may differ from the goal position due to robot constraints), the robot updates the weight for each sample point in the weight update stage. This process is repeated until the remaining weights for all sample points become zero.


\subsubsection{Next goal point ($^{g}x_{t+1}$) determination stage}

At any given discrete-time step $t$, if the robot is located at $x_t$, the next robot goal position ${^{g}x}_{t+1}$ can be determined by the following steps. The robot selects $h$ numbers of sample points $y_j$ by creating a circle centered at the current robot position $x_t$ with an initial radius of $r_0$. The radius is incrementally increasing by $\delta$ until the robot finds $h$ numbers of sample points within the circle. Once the robot has found these points, a trajectory, connecting all the sample points in the circle starting from $x_t$, is generated as depicted in Fig. \ref{fig: next x} (a).

\begin{figure}[t!]
\begin{center}
\subfloat[]{\includegraphics[scale=0.29]{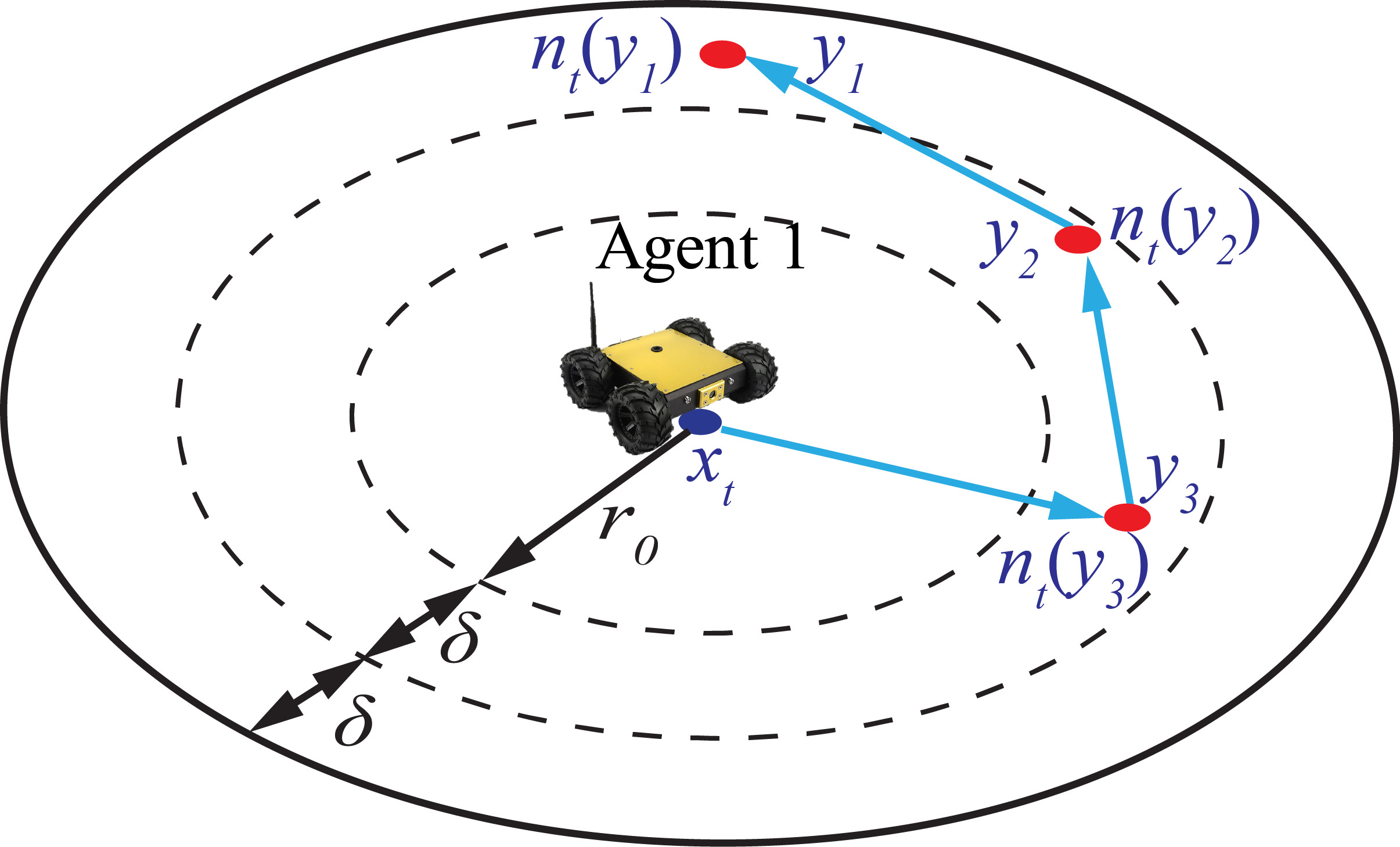}}\qquad\qquad
\subfloat[]{\includegraphics[scale=0.4]{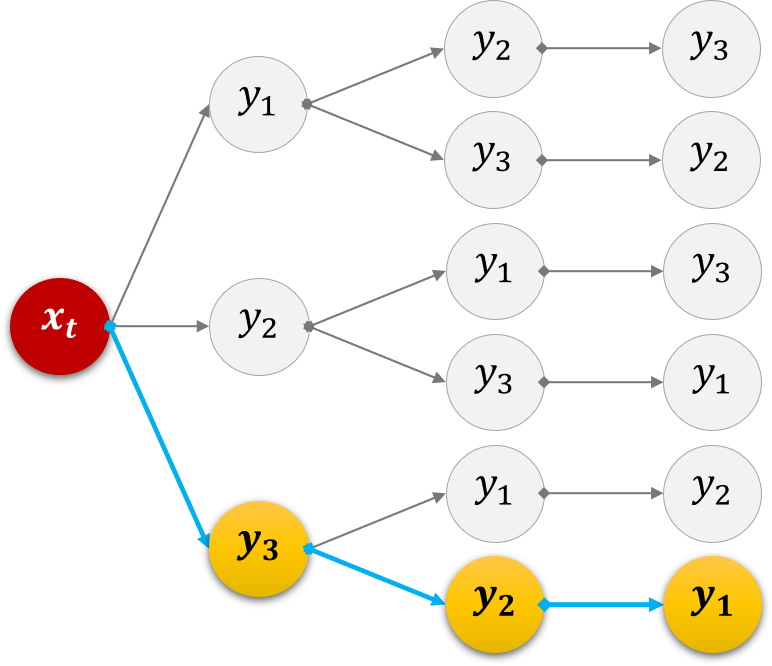}}
\end{center}
\caption{Schematic of the next goal point ${^{g}x}_{t+1}$ determination process: (a) increase the radius of the circle until $h$ numbers of sample points $y_j$ are found; (b) construct a tree associated with the found points $y_j$ and then select a particular path (blue arrows) that has a minimum cost}\label{fig: next x}
\end{figure}

A tree structure is constructed to connect all sample points in the circle starting from $x_t$.
In this case, there exist a total of ${h}!$ trajectories as exemplified in Fig. \ref{fig: next x} (b) ($h=3$ in this case). To determine the order of sample points for the robot to visit, a cost function is defined by
\begin{align}
    C(i) &= \dfrac{\lVert y_{\sigma_{t+1}}-x_{t}\rVert}{n_{t}(y_{\sigma_{t+1}})} + 
    \sum_{j=1}^{h-1}\dfrac{\lVert {y_{\sigma_{t+j+1}} - y_{\sigma_{t+j}}}\rVert}{n_{t}({y_{\sigma_{t+j+1}}})},\label{eqn: cost function}\\ i&=1,2,\ldots,{h} !,\nonumber
\end{align}
where $y_{\sigma_{t+j}}$, $j=1,2,\ldots,{h}$, are the sample points located within the circle with ${\sigma_{t+j-1}\neq\sigma_{t+j}}$, $\forall t\in\mathbb{N}_0$. 

The reason behind the cost function $C(i)$ defined by \eqref{eqn: cost function} is that we desire the robot to follow a trajectory that is short in terms of its total Euclidean distance as well as connects $y_j$ in the circle with a high weight value $n_t(y_j)$ first.

Given the definition $x_{t+1:t+h}:=\{x_{t+1},x_{t+2},\ldots,x_{t+h}\}$, 
the candidate trajectory for the robot $^{c}x_{t+1:t+h}(i)$, $i=1,2,\ldots,h!$, is obtained from the tree construction. Then, the $h$-step optimal trajectory $^{g}x_{t+1:t+h}$ is calculated by
\begin{align}
    ^{g}x_{t+1:t+h} = \{^{c}x_{t+1:t+h}({i^{\star})}\,\vert\, i^{\star}=\text{argmin}_{i} C(i)\}\label{eqn: x_{t+1}}
\end{align}

The robot decides the first point of $^{g}x_{t+1:t+h}$ as the next goal point, $^{g}x_{t+1}$, to visit. Again, the robot may or may not arrive at $^{g}x_{t+1}$ due to the robot's motion constraints.


\subsubsection{Weight update stage}
As soon as the robot arrives at a new position $x_{t+1}$, which might be different from $^{g}x_{t+1}$, the weight $n_{t+1}(y_j)$ associated with a sample point $y_j$ is updated using the following weight update law:
\begin{align}
n_{t+1}(y_j) = n_{t}(y_j)  - \pi_{(t+1)j}^{\star},\, \forall j\label{eqn: weight update}
\end{align}
where, $\pi_{(t+1)j}^{\star}$ is the optimal transport plan, depicting the weight distribution strategy from $x_{t+1}$ to each $y_j$. This plan is calculated from the solution of the LP problem stated below:

\begin{equation}\label{eqn: LP for weight update}
  \begin{aligned}
    & \underset{\pi_{(t+1)j}}{\text{minimize}} & & \sum_{j}\pi_{(t+1)j}\Vert x_{t+1} - y_j \Vert\\
    & \text{subject to} & & \pi_{(t+1)j} \geq 0,\,\,
	\sum_{j=1}^{N}\pi_{(t+1)j} = \dfrac{1}{M},\\
	& & & \pi_{(t+1)j} \leq \min\left( n_{t}(y_j), \frac{1}{M}\right), \, \forall j.
\end{aligned}
\end{equation}

The optimal solution for \eqref{eqn: LP for weight update} quantifies how much of the weight $\frac{1}{M}$ for the robot position $x_{t+1}$ is distributed to each $n_{t}(y_j)$ for the sample point $y_j$. The first constraint in \eqref{eqn: LP for weight update} guarantees that the transport $\pi_{(t+1)j}$ is non-negative. The second constraint is for the law of mass conservation that the total weight distributed from $x_{t+1}$ to $y_j$ must be equal to $\frac{1}{M}$. The last constraint is to make sure that the transportation $\pi_{(t+1)j}$ cannot exceed the given capacity for each point. This constraint is enforced by having the smaller value between $n_{t}(y_j)$ and $\frac{1}{M}$. 
Once the optimal transport plan $\pi_{(t+1)j}^{\star}$ is determined, the weight for each sample point is updated by \eqref{eqn: weight update}.

As $x_{t+1}$ is a single point, the analytic solution for \eqref{eqn: LP for weight update} is provided in the following proposition.
\begin{proposition}\label{prop: analytic solution}
The optimal solution for the LP problem \eqref{eqn: LP for weight update} is obtained by repeating
\begin{align*}
&\pi_{(t+1)j^{\star}} = 
\min\left(n_{t}(y_{j^{\star}}), m(x_{t+1})\right),\\
\text{ where } 
&j^{\star} = \argmin_{j\in\{j\vert n_{t}(y_j) > 0\}} \lVert x_{t+1} - y_j\rVert \\
&m(x_{t+1}) = m(x_{t+1}) - \pi_{(t+1)j^{\star}}\\
&n_{t}(y_{j^{\star}}) = n_{t}(y_{j^{\star}}) - \pi_{(t+1)j^{\star}}
\end{align*}
until $m(x_{t+1})$ becomes zero.
\end{proposition}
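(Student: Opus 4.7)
The plan is to recognize that the LP in \eqref{eqn: LP for weight update} is, despite its transportation-problem appearance, a one-source resource allocation problem in disguise: all supply $1/M$ sits at the single source $x_{t+1}$, and it must be distributed among demand nodes $y_j$ with capacities $u_j := \min(n_t(y_j),\, 1/M)$ and per-unit transport costs $c_j := \|x_{t+1}-y_j\|$. Because the objective $\sum_j \pi_{(t+1)j}\,c_j$ is separable and linear, and the feasible region is a box intersected with a single mass-conservation hyperplane, the LP reduces to a fractional-knapsack-type problem whose optimum saturates capacities in order of ascending $c_j$. This is exactly what the iterative rule in the proposition executes.

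First I would verify well-posedness and feasibility of the candidate $\pi^\star$ produced by the iteration. Feasibility of the LP itself follows from $\sum_j u_j \geq 1/M$: either some $n_t(y_j)\geq 1/M$ (one term alone contributes $1/M$), or all $n_t(y_j)<1/M$, in which case $\sum_j u_j=\sum_j n_t(y_j)=(M-t)/M \geq 1/M$ by Assumption~\ref{assump: remaining weight} for $t\leq M-1$. At each iteration, the assignment $\pi_{(t+1)j^\star}=\min\!\bigl(n_t(y_{j^\star}),\,m(x_{t+1})\bigr)$ is non-negative, does not exceed $u_{j^\star}$, and zeros out either the running supply $m(x_{t+1})$ or the sink weight $n_t(y_{j^\star})$; hence termination occurs in at most $N+1$ iterations, and the allocations sum to $1/M$ by the stopping rule. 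So $\pi^\star$ is LP-feasible.

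The core of the proof is an exchange argument. Let $\bar\pi$ be any other feasible solution and relabel the sample indices so that $c_{j_1}\leq c_{j_2}\leq\cdots\leq c_{j_N}$, with ties broken in the same order in which the greedy rule visits them. Let $k$ be the smallest index with $\bar\pi_{j_k} < \pi^\star_{j_k}$; since both distributions have total mass $1/M$ and agree on all strictly earlier indices, there must exist $l>k$ with $\bar\pi_{j_l}>0$. Setting $\epsilon=\min\!\bigl(\pi^\star_{j_k}-\bar\pi_{j_k},\;\bar\pi_{j_l}\bigr)>0$ and transferring mass $\epsilon$ from $j_l$ to $j_k$ preserves non-negativity, respects the capacity at $j_k$ since $\pi^\star_{j_k}\leq u_{j_k}$, maintains $\sum_j\pi_{(t+1)j}=1/M$, and changes the objective by $\epsilon\,(c_{j_k}-c_{j_l})\leq 0$. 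Iterating this swap strictly decreases a lexicographic potential (the first sorted index at which $\bar\pi$ disagrees with $\pi^\star$), so after finitely many swaps $\bar\pi$ is transformed into $\pi^\star$ without ever increasing the cost. This establishes optimality of $\pi^\star$.

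The delicate point, and the one I expect to spend the most care on, is the handling of ties in the $\{c_j\}$: when two sample points are equidistant from $x_{t+1}$, the optimizer is not unique, and the lexicographic ordering used in the exchange must be made consistent with the tie-breaking rule implicit in the $\argmin$ of the greedy procedure, so that each exchange step strictly advances the potential rather than cycling. Once that ordering is fixed, the remaining steps — non-negativity, capacity, and mass conservation under the swap — are routine bookkeeping.
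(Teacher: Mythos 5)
Your proof is correct and rests on the same underlying idea as the paper's — that with a single source $x_{t+1}$, the optimal plan greedily saturates the nearest sample points with positive weight in order of increasing distance. The paper's proof is essentially a one-sentence assertion of this fact, whereas you supply the justification it omits (feasibility via $\sum_j \min(n_t(y_j),1/M)\geq 1/M$, termination, and the exchange argument reducing any feasible plan to the greedy one without increasing cost), so your write-up is a strictly more rigorous version of the same argument rather than a different route.
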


\begin{proof}
Given a single point $x_{t+1}$ in LP \eqref{eqn: LP for weight update}, the optimal transport plan for the robot is to deliver the maximum permissible weight to the closest points with positive weights in order as long as the weight $m(x_{t+1})$ remains positive.  
\end{proof}

This two-stage approach is repeated in a receding-horizon fashion, meaning that in each time step, the robot only considers $h$ numbers of sample points within the circle during the next goal point determination stage, followed by the weight update stage. Thus, the variable $h$ is given as the horizon length. If there are less than $h$ numbers of sample points having a positive weight, then only these points are considered.
As the robot explores the given domain, the weight of the sample points decreases and this process continues until all the weights of the sample points are completely depleted.

\subsection{Algorithm}
In Algorithm \ref{algorithm:1}, we provide the formal algorithm for the proposed efficient exploration planner.
Initially, the robot position $x_0$ and the reference PDF in the ensemble representation $\{y_j\}_{j=1}^{N}$ are given. Other parameters for the initialization are the number of robot points $M$, the horizon length $h$, and the radius increment $\delta$.
The radius of the circle $r$ increases until there exists a total of $h$ numbers of sample points in the set $\mathcal{R}(x_t,r)$ defined by all sample points $\{y_j\}$ having a positive weight $n_t(y_j) > 0$ within the circle centered at $x_t$ with a radius of $r$. This is given as the condition in line 4 of Algorithm \ref{algorithm:1}. Once satisfied, all possible trajectories are generated starting from $x_t$ to connect $h$ numbers of $y_j$ in the circle, followed by the cost function calculation \eqref{eqn: cost function}. Then, the next goal point $^{g}x_{t+1}$ is obtained by \eqref{eqn: x_{t+1}}, and the robot moves to the next point $x_{t+1}$ based on the given motion controller. In the last stage, the weight is updated by \eqref{eqn: weight update}. This process is repeated in a receding-horizon manner until the discrete-time becomes $M$.

\begin{algorithm}[h!]
\caption{Single-agent Intelligent Exploration Algorithm}\label{algorithm:1}
\begin{algorithmic}[1]
\State initialize $x_0$, $y_j$, $M$, $r_0$, $\delta$, ${h}$, $t\gets 0$
\While{$t\leq M$} 
\State initialize circle's radius by $r\gets r_0$
\While {$\#\mathcal{R}(x_t,r) \leq {h}$ and $n_{t}(y_j)>0$ }
\State $r\gets r + \delta$
\EndWhile
\State calculate the cost function $C(i)$ associated with all possible candidate trajectories $^{c}x_{t+1:t+h}(i)$
\State obtain $^{g}x_{t+1}$ from \eqref{eqn: x_{t+1}}
\State update the robot position $x_{t}$ with the given robot motion controller and the goal position $^{g}x_{t+1}$
\State update weights $n_t(y_j)$ by \eqref{eqn: weight update}
\State $t\gets t+1$
\EndWhile
\end{algorithmic}
\end{algorithm}


\subsection{Performance Measure using Wasserstein Distance }
For the performance measure that is the difference between the spatial PDF and another PDF formed by the robot trajectory, the Wasserstein distance in the LP form \eqref{eqn: LP} can be used to serve as a metric to quantify the performance. This metric provides information on how close the PDF from the robot trajectory is to the given reference PDF. For a large $M$, solving this LP problem in \eqref{eqn: LP} becomes computationally intractable due to the curse of dimensionality (i.e., total $M\times N$ numbers of computation is required). To circumvent this issue, the following theorem is proposed for the upper bound of the optimal solution in a computationally efficient way.

\begin{theorem}\label{thm: W_UB}
Consider the optimization problem \eqref{eqn: LP} under Assumption \ref{assump: remaining weight} with robot points $\{x_i\}_{i=1}^{t}$ determined by the proposed efficient exploration  algorithm. Then, at any time $t\in\mathbb{N}_0$, the Wasserstein distance $W(t)$ is upper bounded by
\begin{align}
    W(t) \leq \sum_{i=1}^{t}\tilde{W}(i) + \sum_{j=1}^{N}n_{t}(y_j) \cdot \lVert x_{t} - y_j\rVert,\label{eqn: W_UB}
\end{align}
where $n_{t}(y_j)$ is the current weight for each $y_j$ after the weight update law \eqref{eqn: weight update} and $\tilde{W}(i) := \minimize_{\pi_{ij}}\sum_{j=1}^{N}\pi_{ij}\Vert x_i - y_j\Vert$ subject to the same constraints in \eqref{eqn: LP for weight update}.
\end{theorem}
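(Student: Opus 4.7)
The plan is to prove the bound by exhibiting a feasible (suboptimal) transport plan whose total cost equals exactly the right-hand side of \eqref{eqn: W_UB}; since the Wasserstein distance minimizes over all feasible plans, the inequality follows immediately. The key is to decompose the full robot mass at time $t$ into two parts: (i) the already-determined points $\{x_i\}_{i=1}^{t}$, each carrying weight $1/M$, and (ii) the $M-t$ undetermined future points, which by Assumption \ref{assump: remaining weight} are all stacked at the current location $x_t$ with total weight $(M-t)/M$.

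First I would construct the candidate plan as follows. For each determined index $i=1,\ldots,t$, let $\pi_{ij}:=\pi_{ij}^{\star}$, the optimal per-step plan solving \eqref{eqn: LP for weight update} at step $i$ (whose cost is $\tilde W(i)$). For each undetermined index $i\in\{t+1,\ldots,M\}$ (all located at $x_t$), I would assign
\begin{equation*}
\pi_{ij} := \frac{n_t(y_j)}{M-t}.
\end{equation*}
Before computing the cost, I would check the three LP constraints of \eqref{eqn: LP}. Non-negativity is immediate. For the row sums, the determined rows sum to $1/M$ by feasibility of each per-step LP, and each undetermined row sums to $\sum_j n_t(y_j)/(M-t) = \tfrac{(M-t)/M}{M-t}=1/M$ (using that the residual mass on samples is exactly $(M-t)/M$, which follows by induction from the weight update \eqref{eqn: weight update}). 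For the column sums, the determined contribution to $y_j$ telescopes via \eqref{eqn: weight update} to $n_0(y_j)-n_t(y_j)$, while the undetermined contribution adds up to $n_t(y_j)$, giving a total of $n_0(y_j)=1/N$, matching the sample marginal.

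Having verified feasibility, I would compute the cost. The determined portion contributes $\sum_{i=1}^{t}\sum_j \pi_{ij}^{\star}\lVert x_i-y_j\rVert=\sum_{i=1}^{t}\tilde W(i)$ by definition of $\tilde W(i)$. The undetermined portion contributes
\begin{equation*}
\sum_{i=t+1}^{M}\sum_{j=1}^{N}\frac{n_t(y_j)}{M-t}\,\lVert x_t-y_j\rVert = \sum_{j=1}^{N} n_t(y_j)\,\lVert x_t-y_j\rVert,
\end{equation*}
since all $M-t$ undetermined robot points sit at $x_t$. Summing the two contributions yields exactly the RHS of \eqref{eqn: W_UB}, and because $W(t)$ is the infimum over all feasible plans, the desired bound follows.

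The main obstacle is bookkeeping rather than deep technique: one must be precise about which measure has which total mass at time $t$, and justify that the weight-update dynamics \eqref{eqn: weight update} indeed preserve $\sum_j n_t(y_j)=(M-t)/M$ and that the column-marginal consistency uses the original sample weights $1/N$ rather than the residual weights. Beyond this accounting, the argument is a direct feasibility-based upper bound on the LP optimum.
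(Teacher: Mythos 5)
Your proof is correct and follows essentially the same route as the paper: both bound the LP optimum by the cost of a feasible plan assembled from the per-step optimal plans $\pi_{ij}^{\star}$ for $i\le t$ together with the residual mass stacked at $x_t$. Your version simply makes explicit the feasibility check (row/column marginals and the telescoping of \eqref{eqn: weight update}) that the paper compresses into the phrase ``by Assumption \ref{assump: remaining weight} and the mass conservation law.''
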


\begin{proof}
At any time $t\in\mathbb{N}_0$, the current and previous robot points $\{x_i\}_{i=1}^{t}$ as well as the remaining weights $n_t(y_j)$, $j=1,2,\ldots, N$, are given by the proposed algorithm. 
Under Assumption \ref{assump: remaining weight},  the future robot points are all accumulated at $x_{t}$.
Then, the Wasserstein distance at any time $t$ (constraints are omitted here) is upper bounded by
\begin{align*}
    W(t) &= \minimize_{\pi_{ij}}\sum_{i=1}^{M}\sum_{j=1}^{N}\pi_{ij}\Vert x_{i} - y_j \Vert\\
    &\leq \minimize_{\pi_{ij}}\sum_{i=1}^{t}\sum_{j=1}^{N}\pi_{ij}\Vert x_i - y_j\Vert + \qquad\qquad\qquad\qquad\qquad\\
    & \qquad\qquad\qquad \minimize_{\pi_{ij}}\sum_{i=t+1}^{M}\sum_{j=1}^{N}\pi_{ij} \lVert x_{i} - y_j\rVert\\
    &\leq \sum_{i=1}^{t}\underbrace{\left(\minimize_{\pi_{ij}}\sum_{j=1}^{N}\pi_{ij}\Vert x_i - y_j\Vert\right)}_{=\tilde{W}(i)} + \qquad\qquad\qquad\qquad\qquad\\
    & \qquad\qquad\qquad\qquad\qquad \sum_{j=1}^{N}n_{t}(y_j) \cdot \lVert x_{t} - y_j\rVert,
\end{align*}
where the last inequality holds by Assumption \ref{assump: remaining weight} and the mass conservation law.
\end{proof}

The upper bound of the Wasserstein distance can be computed at any time $t\in\mathbb{N}_0$ from \eqref{eqn: W_UB}, which only requires $\tilde{W}(t)$ computation, followed by the weight update \eqref{eqn: weight update} for the computation of the second term in \eqref{eqn: W_UB}.
The value for $\tilde{W}(t)$ is analytically obtained by Proposition \ref{prop: analytic solution} and the upper bound is calculated recursively as the values for $\tilde{W}(i)$, $i=1,2,\ldots,t-1$, are already computed, and thus known from the previous time step.


\section{Multi-Agent Exploration}\label{sec: OT-method, multiple agent}
To maximize exploration efficiency, it is better to utilize a team of agents instead of a single agent because deploying multiple agents will reduce the time to explore the given domain, and hence will improve exploration performance.
The previous OT-based algorithm provides insight into how a single-agent system performs the exploration of the given domain, which can be further extended to multi-agent explorations. Depending on communications between agents and control method, two different scenarios will be considered: a centralized and decentralized case. The centralized scenario is developed under the assumption that there exists a supervisory agent that can receive all information for the weight from all agents, updates the weight for common weight $n_t(y_j)$ values, and share it with all agents. This is impractical as communications between agents may not be available in some cases. Moreover, the synchronization issue arises in reality. Thus, the decentralized multi-agent exploration scheme is provided as well to cope with this concern. In the following, both scenarios will be introduced in details. 


\subsection{Centralized Multi-Agent Case}
For the centralized case, a supervisory agent that communicates with all other agents collects information about the weight from all of them, update a common weight, and then transmit this information to all agents. The total number of agents is represented by $n_a\in\mathbb{N}$ and the sample point representation of the reference PDF, $\{y_j\}_{j=1}^{N}$, is assumed to be identical across all agents initially.

In the centralized multi-agent case, the total number of robot points $M$ is given such that
\begin{align}
M = n_a t_e\label{eqn: effective timestep}
\end{align}
where $t_e\in\mathbb{N}$ denotes the effective number of time steps of each agent for exploration. The variable $t_e$ can be interpreted as the maximum allowable time steps for the exploration based on the energy level of the agents. It is assumed that all agents have the same energy level initially.
For the given initial robot positions $\{x_0^k\}_{k=1}^{n_a}$, the number of multi-robot points at any time step $t$ is calculated by $n_at$, where $t\leq t_e$. 

The formal algorithm for the centralized multi-agent scenario is provided in Algorithm \ref{algorithm:2}. 
\begin{algorithm}[h!]
\caption{Centralized Multi-Agent Exploration Algorithm}\label{algorithm:2}
\begin{algorithmic}[1]
\State initialize $x_0^k$, $y_j$, $M$, $N$, $r_0$, $\delta$, ${h}$, $n_a$, $t\gets 0$
\While{$t\leq t_e$}
\State \textbf{each agent implements the following}
\For{$k \gets 1$ to $n_a$} 
\State initialize circle's radius by $r\gets r_0$
\While {$\#\mathcal{R}(x_t^k,r) \leq {h}$ and $n_{t}^k(y_j)>0$}
\State $r\gets r + \delta$
\EndWhile
\State calculate the cost function $C^k(i)$ associated with all possible candidate trajectories $^{c}x^k_{t+1:t+h}(i)$
\State obtain $^{g}x^k_{t+1}$ from \eqref{eqn: x_{t+1}}
\State update the robot position $x^k_{t}$ with the given robot motion controller and the goal position $^{g}x^k_{t+1}$
\State update the individual weight $n_t^k(y_j)$ by \eqref{eqn: weight update}
\EndFor
\State \textbf{supervisory agent does the following}
\State \qquad receives information about $n_t^k(y_j)$ from all agents
\State \qquad updates the common weight $n_t(y_j)$ from \eqref{eqn: general weight update}
\State \qquad transmits $n_t(y_j)$ to all corresponding agents
\State \textbf{each agent} receives $n_t(y_j)$ from supervisory agent and $n_t^k(y_j) \gets n_t(y_j)$
\State $t\gets t+1$
\EndWhile
\end{algorithmic}
\end{algorithm}
Similar to the single-agent algorithm, each agent selects $h$ numbers of sample points within the circle that are centered at each agent's current position $x_t^k$ by following steps from 4 to 8 in Algorithm \ref{algorithm:2}. 
Each agent generates all possible trajectories and calculates the cost $C^k(i)$ from \eqref{eqn: cost function}. The next goal points $^{g}x^k_{t+1}$ for the agents are determined from \eqref{eqn: x_{t+1}} and the robots change their position to the next point $x^k_{t+1}$ driven by the given motion controller. At the new position, each agent updates the weight $n_t^k(y_j)$ using \eqref{eqn: weight update}.  Once the individual weight is updated by each agent, the supervisory agent receives the updated information from all of them. Then, the common weight is updated using the following equation:
\begin{align}\label{eqn: general weight update}
    & n_t(y_j) = \min(n_t^k(y_j)), \qquad k=1,2,\ldots,n_a
\end{align}
This updated general weight information will be shared with all agents and updated by each agent as follows:
\begin{align}\label{eqn: general weight share}
    & n_t^k(y_j) = n_t(y_j), \qquad k=1,2,\ldots,n_a
\end{align}

In this way, each agent exactly knows how other areas are covered by other agents. Therefore, the sample point whose weight is already depleted by other agents will not be revisited because of the information sharing between the agents.
These series of actions as presented in Algorithm \ref{algorithm:2} are performed until the current time step becomes $t_e$.

\subsection{Decentralized Multi-Agent Case}
The underlying assumption for the centralized multi-agent exploration is that the supervisory agent can communicate with all the other agents, regardless of the distances between them. However, in real applications, the communication range determines the agents' ability to communicate with the supervisory agent. Shorter communication range will interrupt this communication and thus the information sharing between the agents and the supervisor, resulting in failure of the centralized exploration scheme. Further, it is known that a centralized control scheme is more vulnerable to a single point of failure (i.e., a breakdown of the supervisory agent will lead to the failure of the whole system).
Considering these issues, the decentralized multi-agent exploration scheme is provided as an alternative but a more realistic solution.

Due to the limited communication range, the agents can only exchange information if any two agents are within the given communication range. Otherwise, each agent explores the given domain independently exactly like a single agent and shares information with other agents when they are within the communication range. 
The formal algorithm for the decentralized exploration strategy is provided in Algorithm \ref{algorithm:3}. 
\begin{algorithm}[h!]
\caption{Decentralized Multi-Agent Exploration Algorithm}\label{algorithm:3}
\begin{algorithmic}[1]
\State initialize $x_0^k$, $y_j$, $M$, $N$, $r_0$, $r_{\text{comm.}}$, $\delta$, ${h}$, $n_a$, $t\gets 0$
\While{$n_t^k(y_j) >  0,\,  \forall j, \forall k $} 
\For{$k \gets 1$ to $n_a$} 
\If{$d_{kq} \leq r_{\text{comm.}}$}
\State update weight information from \eqref{eqn: decentralized information exchange}
\EndIf
\State initialize circle's radius by $r\gets r_0$
\While {$\#\mathcal{R}(x_t^k,r) \leq {h}$ and $n_{t}^k(y_j)>0$}
\State $r\gets r + \delta$
\EndWhile
\State calculate the cost function $C^n(i)$ associated with all possible candidate trajectories $^{c}x^k_{t+1:t+h}(i)$
\State obtain $^{g}x^k_{t+1}$ from \eqref{eqn: x_{t+1}}
\State update the robot position $x^k_{t}$ with the given robot motion controller and the goal position $^{g}x^k_{t+1}$
\State update weights $n_t^k(y_j)$ by \eqref{eqn: weight update}
\EndFor
\State $t\gets t+1$
\EndWhile
\end{algorithmic}
\end{algorithm}
Initially, the positions of the agents $\{x_0^k\}_{k=1}^{n_a}$, the sample point distribution $\{y_j\}$, communication range $r_{comm.}$, number of robot points $M$, the horizon length $h$, and the radius increment $\delta$ are known. All processes are exactly the same as the single-agent case, except when one agent encounters another agent(s) within the given communication range.
If the agent $k$ finds the agent $q$ within the range $r_{\text{comm.}}$ at a given time $t$ (i.e., distance $d_{kq} \leq r_{\text{comm.}} $), then the weight information for the sample points $y_j$, $j=1,2,\ldots, N$ is exchanged and updated using the following rule: 
\begin{align}\label{eqn: decentralized information exchange}
   &n^k_t(y_j) =  n^q_t(y_j) = \min(n^k_t(y_j), n^q_t(y_j)),\\ 
   &k, q \in \{ 1,2,\ldots,n_a\} \text{ for } k\neq q\nonumber
\end{align}
By exchanging information, each agent grasps what sample points are already covered by other agents, leading to an efficient exploration while avoiding area overlaps between agents within the communication range.

It should be pointed out that in the single-agent and centralized multi-agent cases, the number of total time steps is straightforward which is not the case for the decentralized scheme. The duration of the decentralized exploration will depend on the communication range and how frequently the agents communicates with each other. If the communication range covers the entire domain, the agents are capable of communicating with each other at every time steps, technically rendering it a centralized exploration. In this case, the duration of the exploration is $t_e$, where $t_e = \frac{M}{n_a}$. On the contrary, if no agent can communicate with others during the exploration, all agents will act exactly like a single agent and explore the domain completely independently. Here, the total exploration time is the same as the number of total robot points $M$. Clearly, $t_e$ and $M$ are lower and upper limits of the actual decentralized exploration time, respectively. The exploration will continue until the weight $n_t^k(y_j)$ becomes zero for all sample points.

\subsection{Performance Measure}
To measure the performance of both the centralized and decentralized schemes, the upper bound of the Wasserstein distance is provided. 
For large $M$ and $N$, the actual Wasserstein distance computation becomes intractable, however, the following upper bound computation will provide a computationally efficient way to compute the upper bound of the Wasserstein distance, which can be used to measure how a team of robots performs a given exploration mission efficiently.
As the centralized and decentralized schemes work differently, the upper bound computation is developed for each scheme separately.

\subsubsection{Centralized case}
In the centralized case, all agents share the weight information at any time step. Thus, any agent can compute the Wasserstein distance, which is calculated by $W(t) = \minimize_{\pi_{ij}}\sum_{i=1}^{M=n_at_e}\sum_{j=1}^{N}\pi_{ij}\Vert x_{i} - y_j \Vert$ under Assumption \ref{assump: remaining weight}, for the centralized multi-agent system.
To facilitate this computation, an upper bound is provided in the following theorem.

\begin{theorem}\label{thm: Central_W_UB}
Consider the optimization problem \eqref{eqn: LP} under Assumption \ref{assump: remaining weight} with robot points $\{x_i\}_{i=1}^{t}$ determined by the proposed efficient exploration  algorithm. Then, at any time $t\in\mathbb{N}_0$, the Wasserstein distance $W(t)$ is upper bounded by
\begin{align}
    W(t) \leq \sum_{k=1}^{n_a}\sum_{i=1}^{t}\tilde{W}^k(i) + \sum_{k=1}^{n_a}\sum_{j=1}^{N}n_{t}(y_j) \cdot \lVert x_{t}^{k} - y_j\rVert,\label{eqn: Central_W_UB}
\end{align}
where $n_{t}^{k}(y_j)$ is the current weight for each $y_j$ after the weight update law \eqref{eqn: weight update} and $\tilde{W}^{k}(i):=\minimize_{\pi_{ij}^k}\sum_{j=1}^{N}\pi_{ij}^k\Vert x_i^{k} - y_j\Vert$ subject to the same constraints in \eqref{eqn: LP for weight update}.
\end{theorem}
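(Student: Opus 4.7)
The plan is to mirror the structure of the proof of Theorem~\ref{thm: W_UB} but with a careful decomposition of the $M = n_a t_e$ multi-agent robot points into per-agent contributions. Under Assumption~\ref{assump: remaining weight} extended to the multi-agent setting, at any time $t \in \mathbb{N}_0$ each agent $k$ has produced $t$ determined points $\{x_i^k\}_{i=1}^{t}$, each with uniform weight $1/M$, while all of its $t_e - t$ future undetermined points are accumulated at the current position $x_t^k$, carrying a total remaining mass of $(t_e - t)/M$ per agent. Because the supervisory agent synchronizes weights via \eqref{eqn: general weight update}--\eqref{eqn: general weight share}, the common weight satisfies $n_t^k(y_j) = n_t(y_j)$ for every $k$, and mass conservation gives $\sum_{j=1}^{N} n_t(y_j) = n_a(t_e - t)/M$.

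First I would split the full LP objective by agent index and then by ``past vs.\ future'' within each agent, using the fact that minimizing a sum is bounded above by the sum of separately minimized blocks (any feasible choice of per-block marginals yields a feasible global plan that is in general suboptimal):
\begin{align*}
W(t) &= \minimize_{\pi_{ij}} \sum_{i=1}^{M}\sum_{j=1}^{N} \pi_{ij} \lVert x_i - y_j\rVert \\
&\leq \sum_{k=1}^{n_a}\left[\minimize \sum_{i=1}^{t}\sum_{j=1}^{N} \pi_{ij}^k \lVert x_i^k - y_j\rVert + \minimize \sum_{j=1}^{N} \tilde{\pi}_j^k \lVert x_t^k - y_j\rVert \right],
\end{align*}
where the second inner minimization collects the mass $(t_e-t)/M$ of future points of agent $k$, which by Assumption~\ref{assump: remaining weight} all sit at $x_t^k$.

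Next I would bound each of the two inner pieces. For the ``past'' block of agent $k$, each individual step $i$ corresponds exactly to the weight-update LP \eqref{eqn: LP for weight update} whose optimum is $\tilde{W}^k(i)$, so summing over $i = 1,\ldots,t$ gives the first term of \eqref{eqn: Central_W_UB}. For the ``future'' block of agent $k$, any feasible $\tilde{\pi}_j^k$ must satisfy $\tilde{\pi}_j^k \geq 0$ and $\sum_k \tilde{\pi}_j^k \leq n_t(y_j)$, which in particular forces the pointwise inequality $\tilde{\pi}_j^k \leq n_t(y_j)$ for every $j$ and every $k$. Hence
\begin{equation*}
\sum_{j=1}^{N} \tilde{\pi}_j^k \lVert x_t^k - y_j\rVert \leq \sum_{j=1}^{N} n_t(y_j)\lVert x_t^k - y_j\rVert,
\end{equation*}
and summing this over $k$ yields the second term of \eqref{eqn: Central_W_UB}.

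The main obstacle I anticipate is justifying the per-agent decomposition while respecting the shared sample-point marginals $\sum_i \pi_{ij} = n_{\text{init}}(y_j)$ of the global LP: a careless split could double-count mass at the $y_j$'s. The key observation that dissolves this difficulty is that our upper bound deliberately relaxes the global coupling constraint $\sum_k \tilde{\pi}_j^k \leq n_t(y_j)$ to the weaker per-agent constraint $\tilde{\pi}_j^k \leq n_t(y_j)$, which clearly only enlarges the feasible region and therefore preserves the inequality direction. A brief sentence should also note that taking $\tilde{\pi}_j^k$ as unconstrained up to $n_t(y_j)$ in the bound is precisely what produces the $\sum_k$ in the second term---an admittedly loose but computationally cheap surrogate, consistent with the spirit of Theorem~\ref{thm: W_UB}. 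Combining the two bounds gives \eqref{eqn: Central_W_UB} and completes the argument.
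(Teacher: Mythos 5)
Your proposal is correct and follows essentially the same route as the paper's own proof: split the global LP by agent and by past-versus-future points, identify each past block with the per-step weight-update LP to get $\sum_k\sum_{i=1}^{t}\tilde{W}^k(i)$, and bound the accumulated future mass at each $x_t^k$ by replacing the transport plan with the full remaining weights $n_t(y_j)$ (the paper compresses this last step into the phrase ``by Assumption~\ref{assump: remaining weight} and the mass conservation law,'' which you usefully make explicit, including why the resulting $\sum_k$ in the second term is a deliberately loose but cheap relaxation).
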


\begin{proof}
At any time $t\in\mathbb{N}_0$, the current and previous robot points $\{x_i\}_{i=1}^{t}$ as well as the remaining weights $n_t(y_j)$, $j=1,2,\ldots, N$, are given by the proposed algorithm. 
Under Assumption \ref{assump: remaining weight},  the future robot points are all accumulated at $x_{t}$.
Then, the Wasserstein distance at any time $t$ (constraints are omitted here) is upper bounded by
\begin{align*}
    W(t) &= \minimize_{\pi_{ij}}\sum_{i=1}^{M=n_at_e}\sum_{j=1}^{N}\pi_{ij}\Vert x_{i} - y_j \Vert\\
    &\leq \minimize_{\pi_{ij}^{k}}\sum_{k=1}^{n_a}\sum_{i=1}^{t}\sum_{j=1}^{N}\pi_{ij}^{k}\Vert x_i^{k} - y_j\Vert + \\
    &\qquad\qquad \minimize_{\pi_{ij}^{k}}\sum_{k=1}^{n_a}\sum_{i=t+1}^{t_e}\sum_{j=1}^{N}\pi_{ij}^{k} \lVert x_{i}^{k} - y_j\rVert\\
    &\leq \sum_{k=1}^{n_a}\sum_{i=1}^{t}\underbrace{\left(\minimize_{\pi_{ij}^k}\sum_{j=1}^{N}\pi_{ij}^k\Vert x_i - y_j\Vert\right)}_{=\tilde{W}^{k}(i)} + \\
    &\qquad\qquad
    \sum_{k=1}^{n_a}\sum_{j=1}^{N}n_{t}(y_j) \cdot \lVert x_{t}^{k} - y_j\rVert,
\end{align*}
where the last inequality holds by Assumption \ref{assump: remaining weight} and the mass conservation law.
\end{proof}

At any time $t\in\mathbb{N}_0$, the upper bound of the Wasserstein distance is obtained by \eqref{eqn: W_UB}, which only requires $\tilde{W}(t)$ computation (by Proposition \ref{prop: analytic solution}), followed by the weight update \eqref{eqn: weight update} because the values for $\tilde{W}(i)$, $i=1,2,\ldots,t-1$, are already computed and known from the previous time step.

\subsubsection{Decentralized case}
The upper bound of the Wasserstein distance for the decentralized case is computed similarly to the centralized case. The only difference is that all agents do not share the weight information and each agent needs to calculate its own upper bound for the performance measure. The agent $k$ receives the weight information from its neighboring agents within the communication range. The set of neighboring agents within the communication range for the agent $k$ is denoted by $\mathcal{N}_k$. The Wasserstein distance for the agent $k$ is then computed by $ W^k(t) = \minimize_{\pi_{ij}^{k}}\sum_{k\in\mathcal{N}_k}\sum_{i=1}^{t_e}\sum_{j=1}^{N}\pi_{ij}^{k}\Vert x_{i}^{k} - y_j \Vert$ and the upper bound for this value is provided below.

\begin{theorem}\label{thm: Central_W_UB}
Consider the optimization problem \eqref{eqn: LP} under Assumption \ref{assump: remaining weight} with robot points $\{x_i\}_{i=1}^{t}$ determined by the proposed efficient exploration algorithm. Then, at any time $t\in\mathbb{N}_0$, the Wasserstein distance $W^k(t)$ for the agent $k$ is upper bounded by
\begin{align}
    W^{k}(t) \leq \sum_{k\in\mathcal{N}_k}\sum_{i=1}^{t}\tilde{W}^k(i) + \sum_{k\in\mathcal{N}_k}\sum_{j=1}^{N}n_{t}^{k}(y_j) \cdot \lVert x_{t}^{k} - y_j\rVert,\label{eqn: Central_W_UB}
\end{align}
where $n_{t}^{k}(y_j)$ is the current weight for each $y_j$ after the weight update law \eqref{eqn: weight update} and $\tilde{W}^{k}(i):=\minimize_{\pi_{ij}^k}\sum_{j=1}^{N}\pi_{ij}^k\Vert x_i^{k} - y_j\Vert$ subject to the same constraints in \eqref{eqn: LP for weight update}.
\end{theorem}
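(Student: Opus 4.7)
The plan is to mirror the proof of Theorem 2 essentially verbatim, but restrict all agent-indexed sums from the full set $\{1,\dots,n_a\}$ to the neighbor set $\mathcal{N}_k$ since agent $k$ only has access to the weight information of its neighbors in the decentralized scheme. The starting point is the definition $W^k(t) = \minimize_{\pi_{ij}^k}\sum_{k\in\mathcal{N}_k}\sum_{i=1}^{t_e}\sum_{j=1}^{N}\pi_{ij}^{k}\lVert x_i^k - y_j\rVert$, and the goal is to bound it by the right-hand side of \eqref{eqn: Central_W_UB}.

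First I would invoke Assumption \ref{assump: remaining weight} to note that, for every agent $k'\in\mathcal{N}_k$ and every time $t\in\mathbb{N}_0$, the undetermined future points $\{x_i^{k'}\}_{i=t+1}^{t_e}$ collapse onto the current location $x_t^{k'}$, carrying a total remaining mass $\frac{t_e - t}{M}$ that, by the mass conservation constraint in \eqref{eqn: LP for weight update}, exactly matches the residual sample-point mass $\sum_{j=1}^{N} n_t^{k'}(y_j)$. This lets me split the Wasserstein minimization into a ``past'' block $i=1,\dots,t$ and a ``future'' block $i=t+1,\dots,t_e$ and use the fact that the minimum of a sum is bounded above by the sum of the minima of its parts.

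Next I would bound the past block: since each agent's per-step marginal is $\frac{1}{M}$, optimizing the joint transport $\pi_{ij}^k$ over all $(k\in\mathcal{N}_k, i\leq t, j)$ is no smaller than minimizing each single-step, single-agent transport separately, giving $\sum_{k\in\mathcal{N}_k}\sum_{i=1}^{t}\tilde{W}^k(i)$. For the future block, collapsing the points via Assumption \ref{assump: remaining weight} replaces $\lVert x_i^k - y_j\rVert$ with $\lVert x_t^k - y_j\rVert$ for every $i>t$, and the mass conservation law forces the total transported mass from the future of agent $k'$ to $y_j$ to equal $n_t^{k'}(y_j)$; summing produces the second term $\sum_{k\in\mathcal{N}_k}\sum_{j=1}^N n_t^k(y_j)\lVert x_t^k - y_j\rVert$. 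Adding the two bounds yields \eqref{eqn: Central_W_UB}.

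The only real subtlety — and likely the spot a careful reader would question — is the clash of indices: the outer index $k$ in the statement refers to a fixed agent, yet $\mathcal{N}_k$ is summed over with the same symbol $k$, and the term $n_t^k(y_j)$ inside the sum must be read as the weight of the summation variable's agent (as synchronized via \eqref{eqn: decentralized information exchange} on the last encounter). I would flag this by renaming the dummy variable to $k'$ in the proof, so that the bound reads $\sum_{k'\in\mathcal{N}_k}\sum_{i=1}^{t}\tilde{W}^{k'}(i) + \sum_{k'\in\mathcal{N}_k}\sum_{j=1}^{N} n_t^{k'}(y_j)\lVert x_t^{k'} - y_j\rVert$, and then everything else is a straightforward replay of Theorem \ref{thm: Central_W_UB}. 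No new inequality is needed beyond the same mass-conservation/triangle argument used in the centralized case; the main obstacle is purely notational bookkeeping to ensure that the neighbor-restricted optimization on the left matches the neighbor-restricted bound on the right.
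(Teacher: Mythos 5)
Your proposal follows the paper's proof essentially verbatim: the same split of the transport plan into a past block $i=1,\dots,t$ and a future block $i=t+1,\dots,t_e$, the same use of Assumption~\ref{assump: remaining weight} to collapse the undetermined future points onto $x_t^{k'}$, and the same mass-conservation argument yielding the residual term $\sum_{j} n_t^{k'}(y_j)\lVert x_t^{k'}-y_j\rVert$; your renaming of the dummy agent index to $k'$ is a genuine (and welcome) clarification of the statement's overloaded notation. The only blemish is a slip of direction in the past-block paragraph, where you write that the joint optimum ``is no smaller than'' the sum of the per-step, per-agent minima --- it should be \emph{no larger than} (the concatenation of the sequential greedy plans is feasible for the joint problem), which is the direction you correctly state earlier and which the bound requires.
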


\begin{proof}
At any time $t\in\mathbb{N}_0$, the current and previous robot points $\{x_i\}_{i=1}^{t}$ as well as the remaining weights $n_t(y_j)$, $j=1,2,\ldots, N$, are given by the proposed algorithm. 
Under Assumption \ref{assump: remaining weight},  the future robot points are all accumulated at $x_{t}$.
Then, the Wasserstein distance at any time $t$ (constraints are omitted here) is upper bounded by
\begin{align*}
    W^k(t) &= \minimize_{\pi_{ij}^{k}}\sum_{k\in\mathcal{N}_k}\sum_{i=1}^{t_e}\sum_{j=1}^{N}\pi_{ij}^{k}\Vert x_{i}^{k} - y_j \Vert\\
    &\leq \minimize_{\pi_{ij}^{k}}\sum_{k\in\mathcal{N}_k}\sum_{i=1}^{t}\sum_{j=1}^{N}\pi_{ij}^{k}\Vert x_i^{k} - y_j\Vert + \\
    &\qquad\qquad\minimize_{\pi_{ij}^{k}}\sum_{k\in\mathcal{N}_k}\sum_{i=t+1}^{t_e}\sum_{j=1}^{N}\pi_{ij}^{k} \lVert x_{i}^{k} - y_j\rVert\\
    &\leq \sum_{k\in\mathcal{N}_k}\sum_{i=1}^{t}\underbrace{\left(\minimize_{\pi_{ij}^k}\sum_{j=1}^{N}\pi_{ij}^k\Vert x_i - y_j\Vert\right)}_{=\tilde{W}^{k}(i)} + \\
    &\qquad\qquad\sum_{k\in\mathcal{N}_k}\sum_{j=1}^{N}n_{t}^{k}(y_j) \cdot \lVert x_{t}^{k} - y_j\rVert,
\end{align*}
where the last inequality holds by Assumption \ref{assump: remaining weight} and the mass conservation law.
\end{proof}

As described in the performance measure for the single-agent scenario, the upper bound of the Wasserstein distance can be calculated in a computationally efficient manner using the analytic solution in Proposition \ref{prop: analytic solution}. Therefore, the performance can be monitored for both centralized and decentralized cases in real time.
\section{Time-Varying Distribution: Random Walk Model}\label{sec: time-varying}

In previous sections, the reference PDF is assumed to be stationary, which may not be realistic in some cases such as search, surveillance, and monitoring missions where targets are moving in the domain. Thus, it is more natural to consider the time-varying reference PDF to cope with more complicated but realistic scenarios. One of the benefits of using the OT-based efficient exploration scheme is that the spatio-temporal evolution of the given reference PDF can be incorporated into the plan since the reference PDF is represented by the ensemble. For any given dynamics used for the time-varying PDF scenario, each sample point $y_j$ can be updated based on the given dynamics. While each sample point is evolving with the given dynamics in each discrete-time step, a team of agents will be able to reflect the spatio-temporal evolution of the reference PDF in the proposed scheme.

Although the positions of sample points can be updated according to any dynamics for given applications (e.g., wind/ocean waves and temperature propagation), the uncorrelated random walk is considered here to model the movement of targets and the sample points.
In general, a correlated random walk model \cite{bergman2000caribou, bovet1988spatial, kadota2011analysis, kareiva1983analyzing} has been used to describe the movement of foraging animals, fishes, and insects, however, an uncorrelated random walk dynamics is applied for simplicity.  
We can model the stochastic movement of the animal herd with $N_h\in\mathbb{N}$ numbers of animals as follows. At given time $t$ with the location of $k^{th}$ member of a foraging herd $z_{t}^{k}$, the updated random location at the next time step $t+1$ can be found by
\begin{align}\label{eqn: animal position update}
    z_{t+1}^{k} = z_{t}^k + vu
\end{align}
where $v$ is a diffusion rate constant that indicates the diffusion rate of the foraging herd and $u = [u_x, u_y]$ represents two uniformly distributed random numbers such that $-1\leq u_x, u_y \leq 1$.

Since the animal positions are changing with time, the probability distribution for their locations also needs to be updated. Instead of adjusting the distribution, we change the position of the sample points $\{y_j\}$ similar to the animal position update. For a $j^{th}$ sample point $y_{j}(t)$ at time $t$, the location at the next time step $t+1$ is updated by
\begin{align}\label{eqn: sample point update}
    y_{j,t+1} = y_{j,t} + vw
\end{align}
where $v$ is the diffusion rate constant as in \eqref{eqn: animal position update} and $w = [w_x, w_y]$ is a vector for two uniformly distributed random numbers with $-1\leq w_x, w_y \leq 1$. The sample point updates will be performed in each time step, followed by the multi-robot trajectory generation.

\section{Simulations}\label{sec: simulations}

Several simulations were conducted to verify the technical soundness of the proposed efficient multi-robot exploration schemes. 
Although the proposed methods apply to any robot dynamics (e.g., the result for the nonlinear unicycle dynamics is presented in the previous work \cite{kabir2020receding}) as the multi-robot trajectory generator is separated from it, the robot dynamics considered here is the first-order as follows:
\begin{align}\label{eqn: first order continuous}
    \dot{x}(t) = \frac{dx(t)}{dt} = u(t)
\end{align}
where $x(t)\in\mathbb{R}^2$ is the continuous planar position of the agent and $u(t)\in\mathbb{R}^2$ is the instantaneous velocity as the control input for the first-order dynamics. 
The main reason for the simple first-order dynamics being chosen is for the comparison between the proposed OT-based method and the SMC method as the SMC method is developed for first-order and second-order dynamics. 

The discrete-time counterpart of \eqref{eqn: first order continuous} with the proposed control input $u$ is 
\begin{align}
    x_{t+1} = x_t + u\Delta t = x_t + u_{max}\frac{^gx_{t+1} - x_t}{||^gx_{t+1} - x_t||}\Delta t\label{eqn: robot motion control}
\end{align}
where $x_{t} = [\mathsf{x}_t, \mathsf{y}_t]^{T}$ is the robot position with $\mathsf{x}_t,\mathsf{y}_t\in\mathbb{R}$, $u_{max}$ is the maximum speed attainable by the robot, $\Delta t$ is the time interval for the discretization, and $^gx_{t+1}$ is the goal point for the next time step determined by \eqref{eqn: x_{t+1}} in the next goal point determination stage. 

All simulations were carried out by the computer platform, a $64$-bit quad core Intel $8^{th}$ gen Core i$5-8250$U @ $1.60$ GHz processor and DDR-$4$ $16$GB RAM. MATLAB $2016$a was used to simulate the proposed methods as the software.

\subsection{Centralized Multi-Agent Case}

The reference PDF is given as a mixture of Gaussian in the following form: $\rho = \sum_{i=1}^{4}\alpha_i\rho_i$, where $\alpha_i = 0.25$, $\forall i$, and each component-wise bivariate Gaussian distribution, $\rho_i$, has the mean and covariance given by
\begin{align*}
    \mu_1 &=[300, 1200]^{T},  \mu_2=[1000, 900]^T, \\
    \mu_3 &= [700, 300]^T, 
    \mu_4 = [1500, 1000]^T\\
    \Sigma_1 &= \begin{bmatrix}
       8000 & 0 \\ 
          0 & 4800
    \end{bmatrix}, 
    \Sigma_2 = \begin{bmatrix}
        3200 & 0\\ 
          0 & 4800
    \end{bmatrix}, \\
    \Sigma_3 &= \begin{bmatrix}
        6000 & 0 \\ 
           0 & 4800
    \end{bmatrix}, 
    \Sigma_4 = \begin{bmatrix}
        1500 & 0\\ 
          0 & 5000
    \end{bmatrix}
\end{align*}

Also, the simulation parameters used in this simulation are
\begin{itemize}
  \item Domain size: $1800  \times 1600 $
  \item Number of agents: $n_a = 5$
  \item Maximum number of robot steps: $M=5000$
  \item Effective time steps: $t_e = 1000$
  \item Number of sample points for the mixture of Gaussian distribution: $N = 2000$
  \item Initial robot positions: 
\begin{table}[!h]
\centering
\begin{tabular}{|c|c|c|c|c|c|}
\hline
  & agent 1    & agent 2    & agent 3    & agent 4   & agent 5    \\ \hline
$\mathsf{x}$ & 1000 & 1600 & 1400 & 300 & 600  \\ \hline
$\mathsf{y}$ & 1200 & 800  & 1300 & 800 & 1200 \\ \hline
\end{tabular}
\end{table}
  \item Maximum velocity of the robots: $100$
\end{itemize}

The snapshots of simulation for the five-agent exploration is described in Fig. \ref{fig: Centralized}. 
From the reference PDF given above, a total of $300$ randomly distributed targets were generated with red plus marks in Fig. \ref{fig: Centralized} (a). 
The agents are assumed to equip onboard sensors with a sensing range limit, $r_{\text{sensing}} = 15$, to detect targets. Although the sample points and the targets are generated from the identical distribution, positions of each set are different from each other. Moreover, all agents share the sample point information $\{y_j\}$, whereas the target positions are completely unknown to all of them. The agents can detect a target(s) only if a target(s) is within the sensing range. In Fig. \ref{fig: Centralized} (a)-(e), the detected targets are represented by black dots. 

\begin{figure*}[t]
\centering
\subfloat[$t=0$]{\includegraphics[scale=0.29]{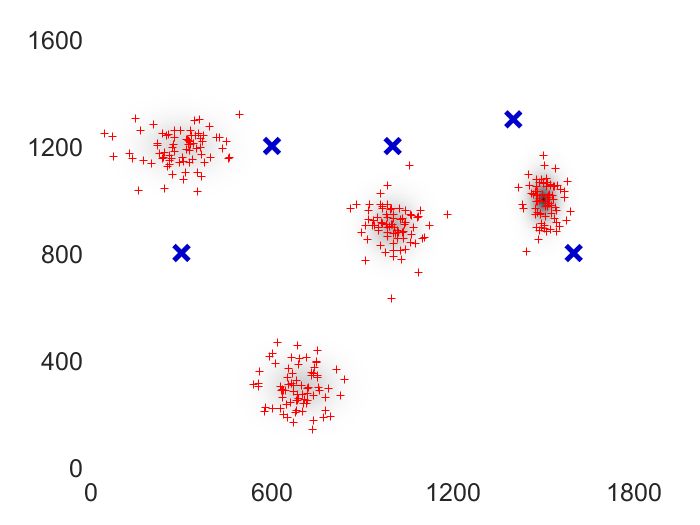}}
\subfloat[$t=300$]{\includegraphics[scale=0.29]{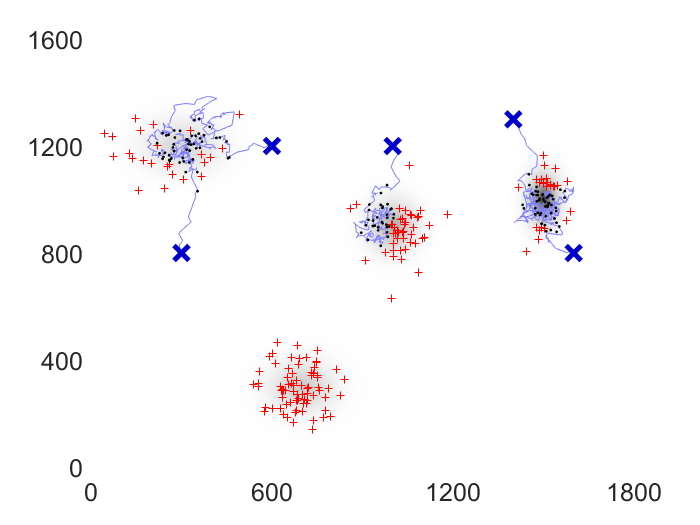}}
\subfloat[$t=600$]{\includegraphics[scale=0.29]{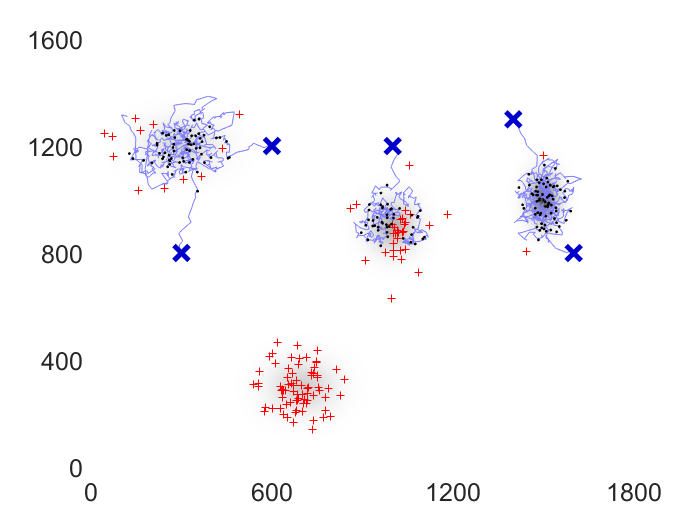}}\\
\subfloat[$t=800$]{\includegraphics[scale=0.29]{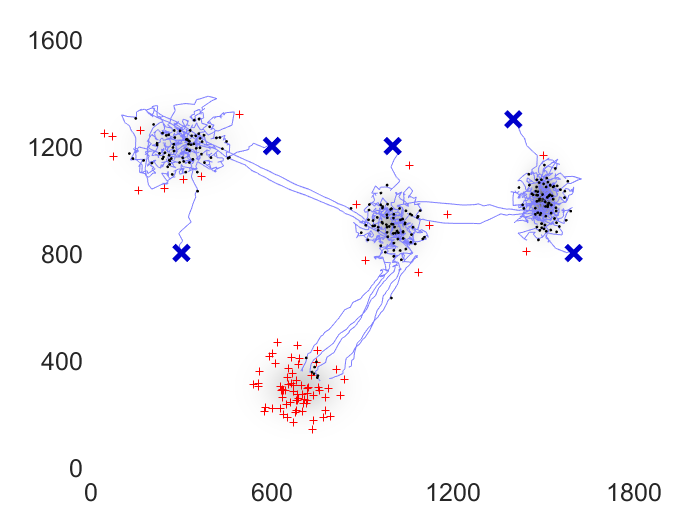}}
\subfloat[$t=1000$]{\includegraphics[scale=0.29]{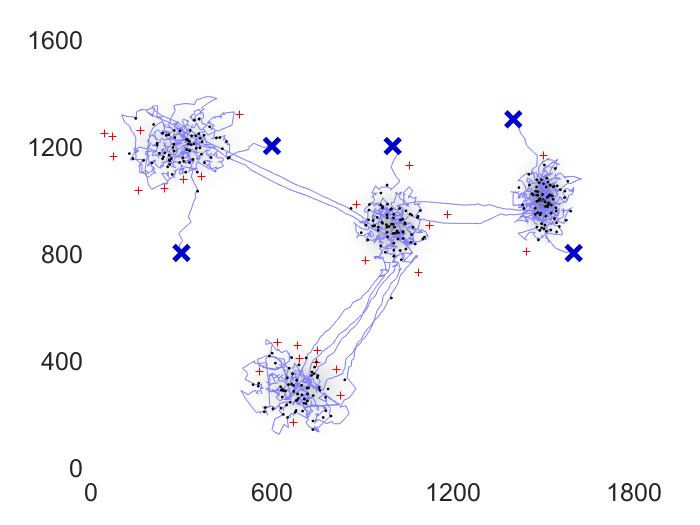}}
\subfloat[$W_{UB}$]{\includegraphics[scale=0.2]{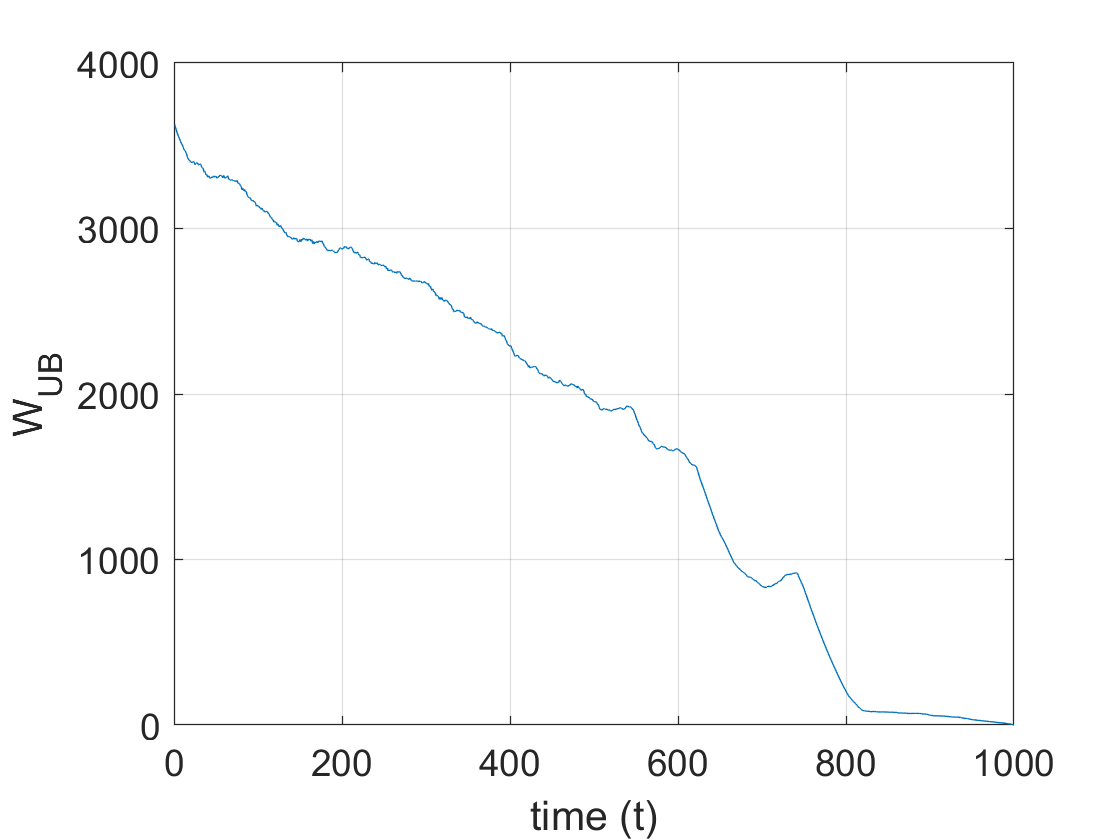}}
\caption{Snapshots of the centralized multi-agent exploration  (a)-(e); and the upper bound of the Wasserstein distance (f)}\label{fig: Centralized}
\end{figure*} 

Initially, the robots are located at the given initial positions $x_0$, represented by the blue cross marks in Fig. \ref{fig: Centralized}. 
From Algorithm \ref{algorithm:2}, each agent generates a circle centered at the current robot position with a radius of $r$ that gradually increases by the fixed increment $\delta$ until $h$ numbers of sample points are within the circle. 
Once found, all possible trajectories are generated connecting the sample points within the circle starting from the current robot position and the cost for each trajectory is calculated. Next, the trajectory with the minimum cost is selected and the first point of that trajectory is considered as the next goal point $^gx_{t+1}$. The next position of the robot is updated using \eqref{eqn: robot motion control}. Then, the weights for the sample points are updated according to the method explained in the weight update stage. Individual weight information is collected by the supervisory agent and shared with all other agents for the centralized scenario. This procedure continued until the current time step becomes $t = t_e$.

Each agent starting from the given initial position approaches the closest modal Gaussian as shown in Fig \ref{fig: Centralized} (a). 
As the top-left and top-right Gaussian components are covered by two different groups of two agents, these areas are explored faster than the top-middle one covered by a single agent. These two agents finished exploring corresponding areas and then, moved to the top-middle region. After exploring this region, all agents move towards the bottom region that is yet explored as illustrated in Fig. \ref{fig: Centralized} (e).

The upper bound of the Wasserstein distance, $W_{UB}$, is presented in Fig. \ref{fig: Centralized} (f) to provide the performance measure. As five agents formed the ensemble resembles the reference PDF over time, $W_{UB}$ decreased gradually.
At the end of the simulation $t=1000$, the distribution from the five-agent system is very close to the given spatial distribution, resulting in the final value of $W_{UB}$ by $0.2421$, which is really small compare to the initial value of $3600$.

\subsection{Performance Comparison}
The simulation results provided in this part are mainly for the performance comparison between the proposed and Spectral Multiscale Coverage (SMC) methods, proposed in \cite{GM-IM:11}. 
In particular, two comparison subjects are the target detection rate and computation time. 
As the SMC method was developed only for the centralized control strategy for multi-robot explorations, it is logical to compare the performance with the proposed centralized exploration method. 

The multi-robot exploration results for the proposed and SMC methods are presented in Fig. \ref{fig: Exploration Comparison}.
The simulations were conducted with the same parameters in the centralized case, which were used for both the proposed and SMC methods for a fair comparison.
The SMC method utilizes the Fourier basis functions to achieve efficient multi-robot explorations as a tool. Theoretically, infinite numbers of Fourier basis functions are required, which is infeasible in implementation and hence, truncation is necessary. 
The number of Fourier basis functions used for the simulation in Fig. \ref{fig: Exploration Comparison} is $K^2 = 225$ (the square term indicates the two-dimensional case) and the robot positions are updated using the first-order dynamics with the control law from the SMC method.

The proposed OT-based method ended up with a total of $273$ target detection out of $300$ targets (detection rate: $91.33\%$), whereas the SMC method detected a total of $243$ targets (detection rate: $81\%$) after $1000$ time steps. One of the reasons for this result is that the SMC method  detoured the path from one region to another, instead of taking the shortest path, as shown in Fig. \ref{fig: Exploration Comparison} (a), leading to exploration inefficiency. On the other hand, it can be observed in Fig. \ref{fig: Exploration Comparison} (b) that the agents took short paths, enabling them to spend more time in the areas of interest and hence, detect more targets. 

\begin{figure*}[tbph!]
\centering
\subfloat[SMC method]{\includegraphics[scale=0.38]{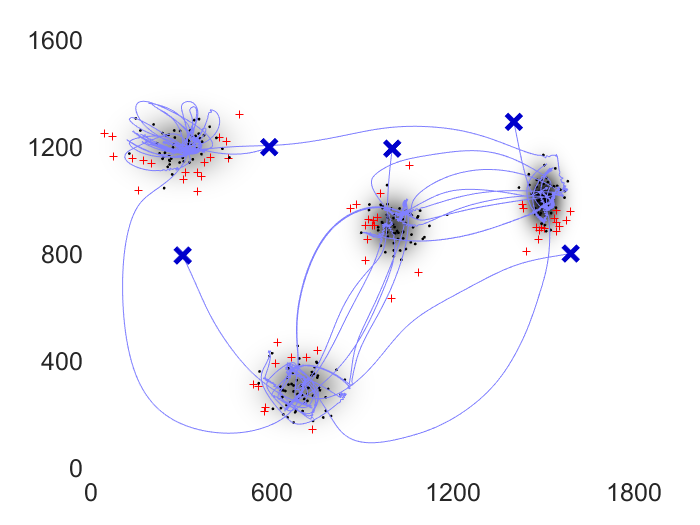}}\qquad
\subfloat[OT-based method]{\includegraphics[scale=0.38]{cen_10.png}}
\caption{Multi-agent exploration trajectories for (a) the SMC method; and (b) the proposed OT-based method}\label{fig: Exploration Comparison}
\end{figure*}

For a better comparison between the OT-based and SMC methods with quantitative results, a total of 50 simulations were conducted with the following conditions:
\begin{itemize}
    \item Total cases: $4$ (OT, SMC with $K = 10, 15,$ and $20$)
    \item Initial robot positions: randomly distributed with uniform distribution in the domain
    \item Target positions: randomly generated with the given reference PDF above
    \item Total simulation runs: 50 (in each run, initial robot positions and target points are randomly generated)
\end{itemize}

\begin{figure*}[tbph!]
\centering
\subfloat[]{\includegraphics[scale=0.35]{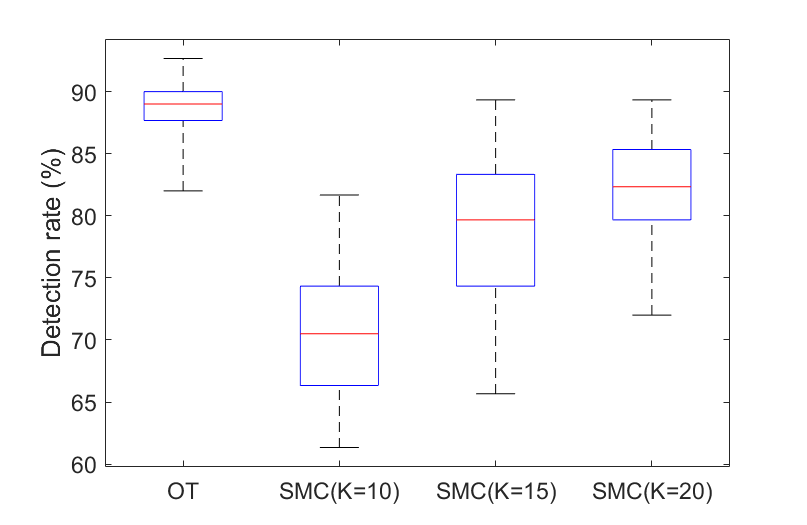}}\qquad
\subfloat[]{\includegraphics[scale=0.35]{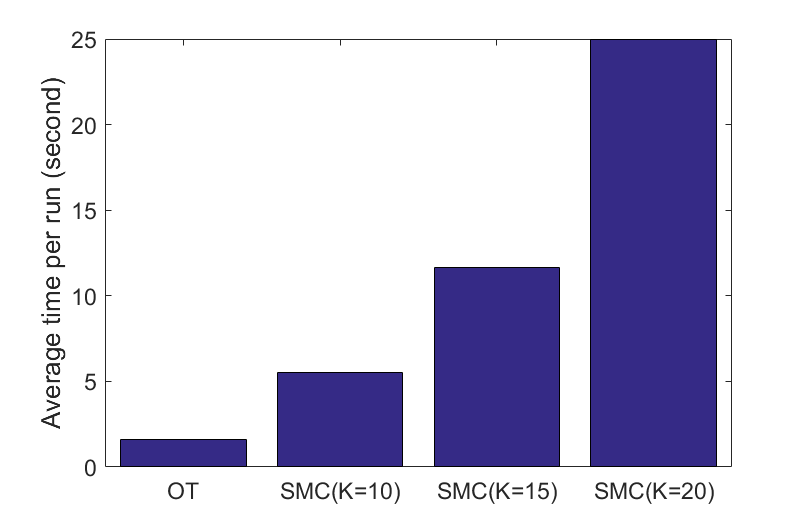}}
\caption{Statistical data for the performance comparison: (a) a boxplot for the target detection rates; and (b) a barplot for the averaged execution time}\label{fig: Statistical Comparison}
\end{figure*}

In Fig. \ref{fig: Statistical Comparison}, the statistical data obtained from 50 simulation runs are presented for the performance comparison between the proposed and SMC methods. From Fig. \ref{fig: Statistical Comparison} (a), the OT-based method outperforms the SMC method in terms of the detection rate (the median detection rate for the OT-based method is 89\%, whereas that for all SMC methods is lower than 82\%). Although increasing $K$ values improves the detection rate of the SMC method, the detection rate for $K = 20$ (which corresponds to $400$ Fourier basis functions for 2D case) is still lower than that for the OT-based method. Moreover, it is observed that the proposed OT-based method is more robust to the randomness of initial robot positions as well as target positions since the boxplot variance of the OT-based method is much smaller than all the other cases. This implies that the SMC method is more sensitive to the given initial conditions.
Fig. \ref{fig: Statistical Comparison} (b) presents the barplots for the averaged execution time. While increasing $K$ leads to a better detection rate, it causes more computation time as a tradeoff.
From both comparison results, it is demonstrated that the proposed OT-based method not only performs better but also takes much less time than the SMC method.

\subsection{Decentralized Multi-Agent Case}
To test the proposed decentralized multi-agent exploration scheme, another simulation was carried out and the results are presented in Fig. \ref{fig: decentralized}. For the decentralized case, the ensemble of the reference PDF is represented by the green dots instead of the colormap. The target detection test is not conducted, however, the upper bound of the Wasserstein distance is provided as the performance measure. 
The spatial distribution is given as a mixture of Gaussian having three modal Gaussian components as follows:
\begin{align*}
    \mu_1 &=[300, 700]^{T},  \mu_2=[1200, 900]^T, \mu_3 = [700, 250]^T\\
    \Sigma_1 &= \begin{bmatrix}
       8000 & 0 \\ 
          0 & 4800
    \end{bmatrix}, 
    \Sigma_2 = \begin{bmatrix}
        3200 & 0\\ 
          0 & 4800
    \end{bmatrix},\\ 
    \Sigma_3 &= \begin{bmatrix}
        6000 & 0 \\ 
           0 & 4800
    \end{bmatrix}
\end{align*}

Other simulation parameters are:
\begin{itemize}
  \item Domain size: $1500  \times 1200$
  \item Number of agents: $n_a = 2$
  \item Maximum allowable number of robot steps: $M=2000$
  \item Number of sample points for the multi-modal Gaussian distribution: $N = 1200$
  \item Initial robot positions:\\
  $x_0 = [1000, 200]^{T},\, [400, 1000]^{T}$
  \item Maximum velocity of the robot: $100$
  \item Robot communication range: $r_{\text{comm.}}=100$
\end{itemize}

\begin{figure*}[tbph!]
\centering
\subfloat[$t=0$]{\includegraphics[scale=0.29]{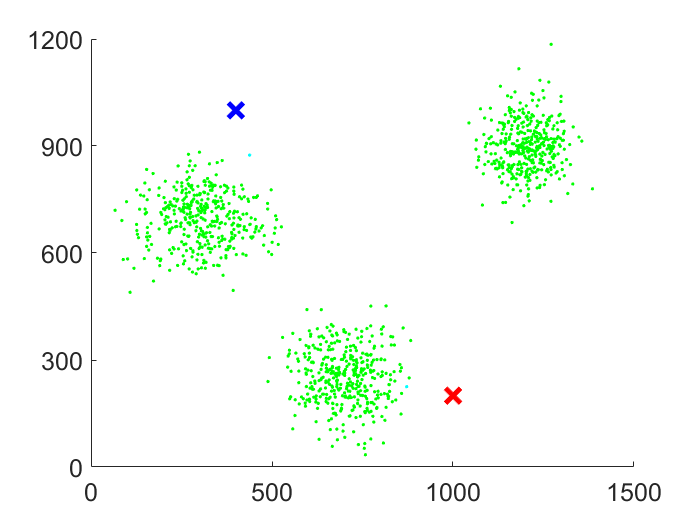}}
\subfloat[$t=300$]{\includegraphics[scale=0.29]{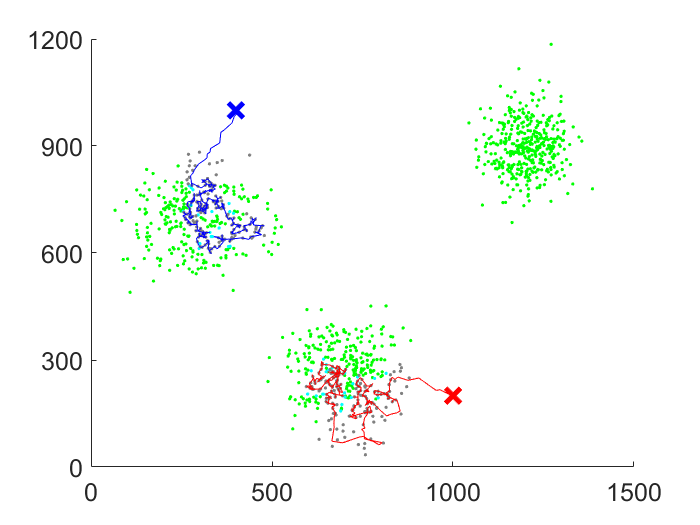}}
\subfloat[$t=600$]{\includegraphics[scale=0.29]{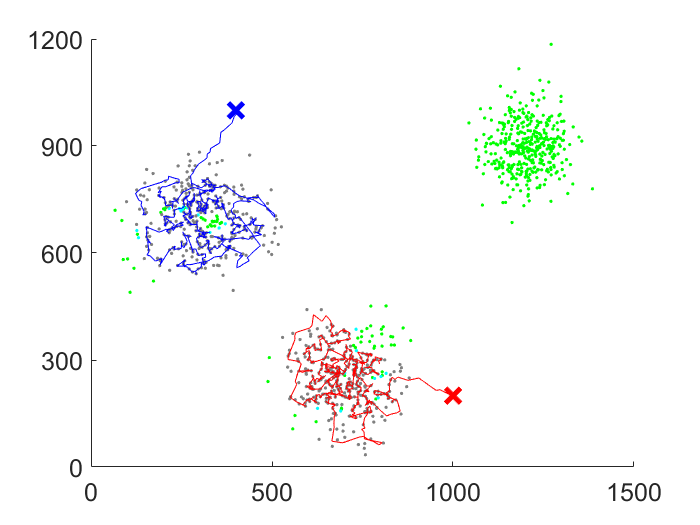}}\\
\subfloat[$t=700$]{\includegraphics[scale=0.29]{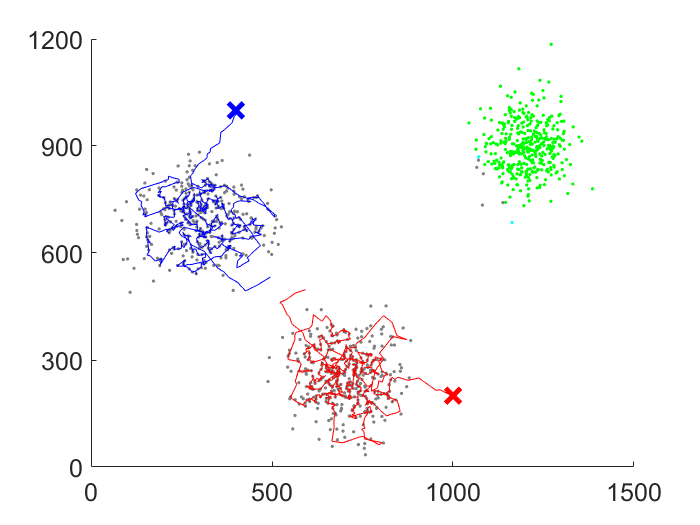}}
\subfloat[$t=1057$]{\includegraphics[scale=0.29]{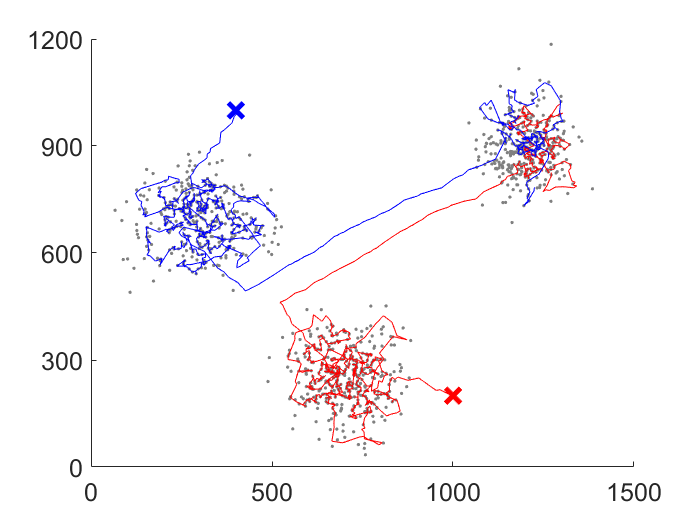}}
\subfloat[$W_{UB}$]{\includegraphics[scale=0.29]{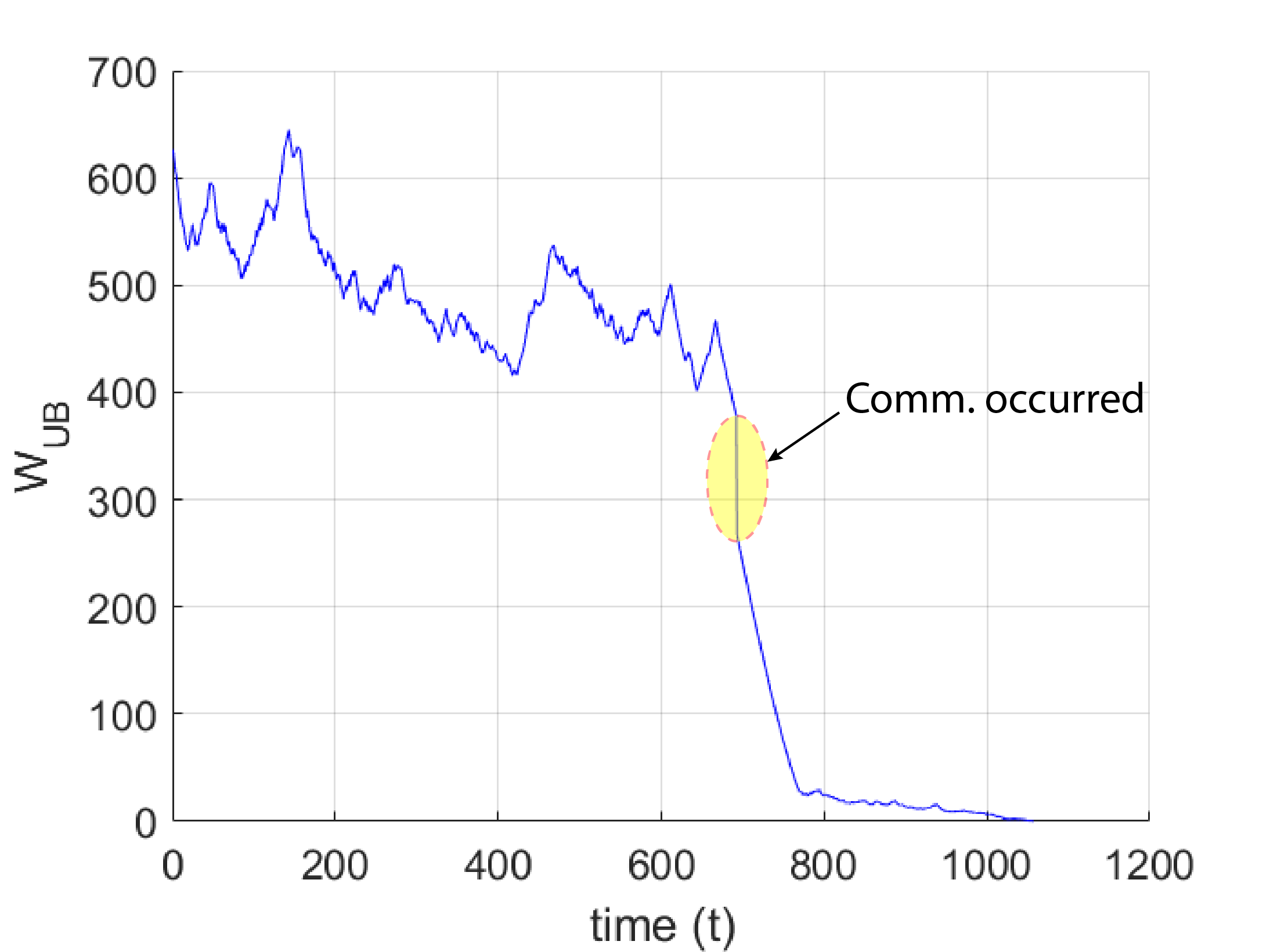}}
\caption{Snapshots of decentralized multi-agent exploration of the given spatial distribution (a)-(e); and the upper bound of the Wasserstein distance (f)}\label{fig: decentralized}
\end{figure*}

In Fig. \ref{fig: decentralized} (a), the initial positions of two agents are indicated by the blue and red crosses and the sample points are presented in green dots with evenly distributed weight initially. As robots explore the domain, the weights for the sample points are depleted, which are described by gray dots.
In Fig. \ref{fig: decentralized} (a), two agents are positioned considerably far away from each other and hence, they are not able to communicate with each other. This resulted in a completely independent exploration plan due to the lack of communication and information sharing. 


Once the exploration of their respective areas is finished (Fig. \ref{fig: decentralized} (d)), each agent traverses the domain to visit the next closest unexplored region. 
When two agents are close enough within the communication range, they start to share and exchange the weight information. As a result, both agents recognize which region is already covered by another agent.
This process is captured in Fig. \ref{fig: decentralized} (e) as they approached each other and then, shared the weight information.
The agents changed the direction and started to travel towards the unexplored region in the top-right. 
 
In Fig. \ref{fig: decentralized} (e), the agents stay within the communication range most of the time while exploring the region, resulting in the efficient exploration by two agents.
From the last figure, it can be observed that the trajectories of two agents do not overlap in most cases, which acts similarly to the centralized exploration strategy. 
The simulation termination time is set up as the largest time among each agent's time to completely deplete the weight of sample points. In this simulation, the exploration has finished at $t = 1057$, which is greater than the centralized effective exploration time $t_e = 1000$, but less than the maximum robot steps $M = 2000$. This implies that the two-agent system effectively explored the domain even with the decentralized control scheme by reducing the time to cover the given areas of interest almost by a half ($1057/2000$).

The upper bound of the Wasserstein distance, $W_{UB}$, is also provided in Fig. \ref{fig: decentralized} (f). While two agents independently cover each region (the left and bottom region, respectively), $W_{UB}$ decreases slowly. This upper bound $W_{UB}$ starts to decrease sharply during the interval (between $t=667$ and $692$) that the red agent traveled toward the left region after completely covering the bottom region. When the two agents encountered with each other ($t=693$), the information exchange has occurred, resulting in a sudden drop of $W_{UB}$ at this instance and then, both agents headed to the right region. In the final step $t=1057$, $W_{UB}$ reached a very small value ($0.1847$) as the agents completed the OT-based exploration. This quantified value using the Wasserstein distance assures that the decentralized multi-robot system attained the efficient exploration.

\subsection{Time-Varying Distribution Case}

For the validation of the adapted strategy on time-varying distributions, simulations were carried out for two different scenarios: time-varying and time-invariant spatial distributions.
For the time-invariant scenario, the targets are moving according to the given random walk model, however, the reference PDF is given as stationary. Therefore, the agents are not able to predict how targets are moving. On the other hand, the spatio-temporal evolution of the reference PDF is incorporated into the exploration plan in the time-varying scenario.
These two different scenarios are considered for the simulation in order to compare the effectiveness of the time-varying distribution with respect to the time-invariant case.

For both scenarios, a bimodal Gaussian distribution is considered as a given spatial distribution. The following parameters are used throughout the simulations for both cases.
\begin{align*}
    \mu_1 &=[600, 600]^{T}, \quad \mu_2=[-50, 0]^T\\
    \Sigma_1 &= \begin{bmatrix}
       40 & 0 \\ 
        0 & 24
    \end{bmatrix}, \quad
    \Sigma_2 = \begin{bmatrix}
        320 & 0\\ 
          0 & 480
    \end{bmatrix}
\end{align*}

The simulation results for the time-invariant and time-varying distributions are provided in Fig. \ref{fig: time-varying} (a)-(c) and (d)-(f), respectively. The initial robot positions are shown in blue crosses, and the detected and undetected targets, respectively, are shown in red pluses and black dots.

The random walk dynamics  \eqref{eqn: animal position update} was applied to the targets, where the diffusion rate constant $v = 7$. 
To make clear presentation of the simulation results, targets are set to be stationary once detected.
The centralized exploration strategy is applied to both cases and the common parameters for the simulations are as follows:

\begin{itemize}
  \item Domain size: $2000 \times 2000$
  \item Number of agents: $n_a = 2$
  \item Maximum number of robot steps: $M=2000$
  \item Effective time steps: $t_e = 1000$
  \item Number of sample points for the bi-modal Gaussian distribution: $N = 1000$
  \item Number of targets: $N_h = 500$
  \item Initial robot positions:\\
  $x_0 = [0, 100]^{T},\, [100, -50]^{T}$
  \item Maximum velocity of the robot: $100$
  \item Robot sensor range: $r_{\text{sensing}} = 15$
\end{itemize}

\begin{figure*}[tbph!]
\centering
\subfloat[$t=0$]{\includegraphics[scale=0.29]{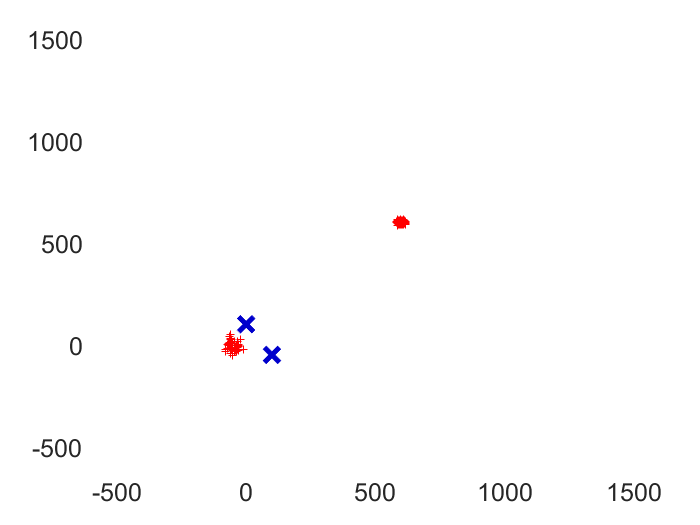}}
\subfloat[$t=500$]{\includegraphics[scale=0.29]{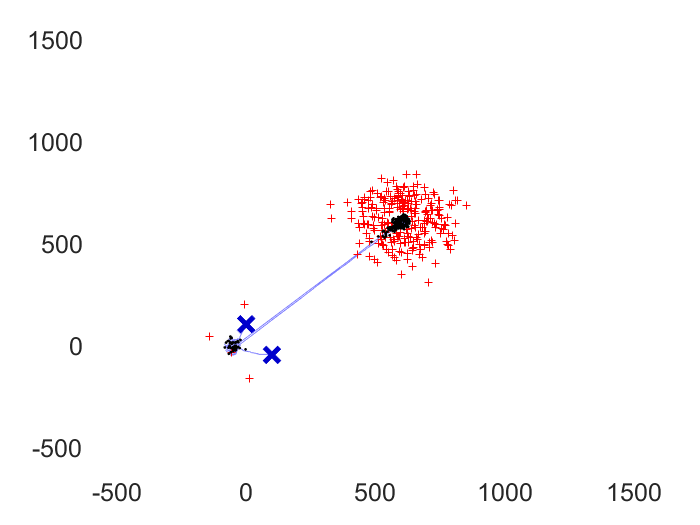}}
\subfloat[$t=1000$]{\includegraphics[scale=0.29]{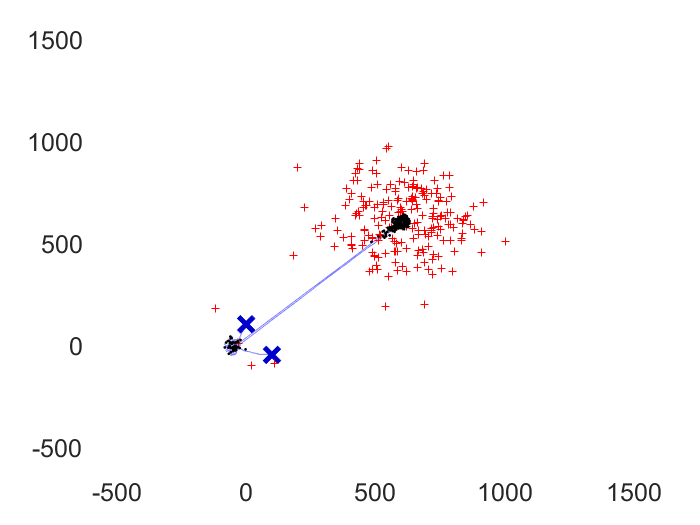}}\\
\subfloat[$t=0$]{\includegraphics[scale=0.29]{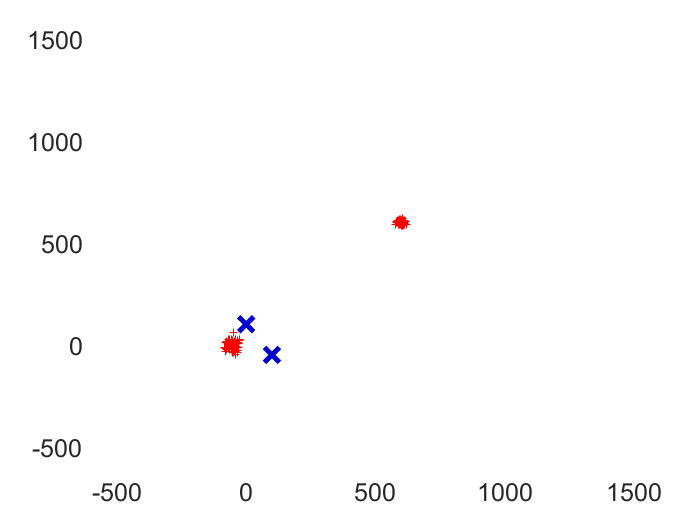}}
\subfloat[$t=500$]{\includegraphics[scale=0.29]{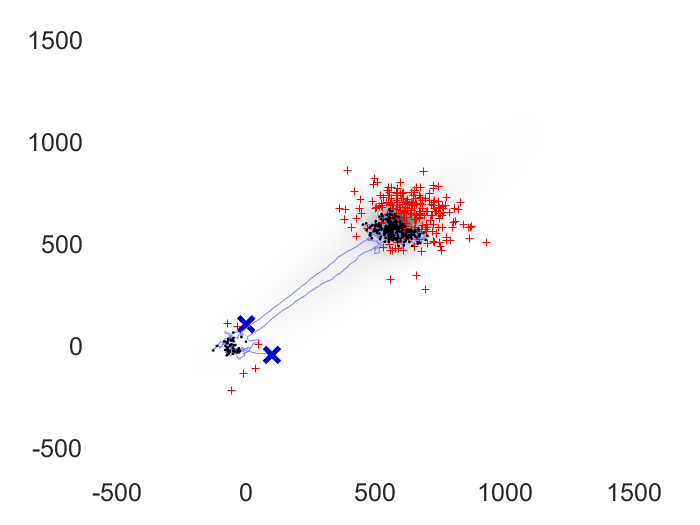}}
\subfloat[$t=1000$]{\includegraphics[scale=0.29]{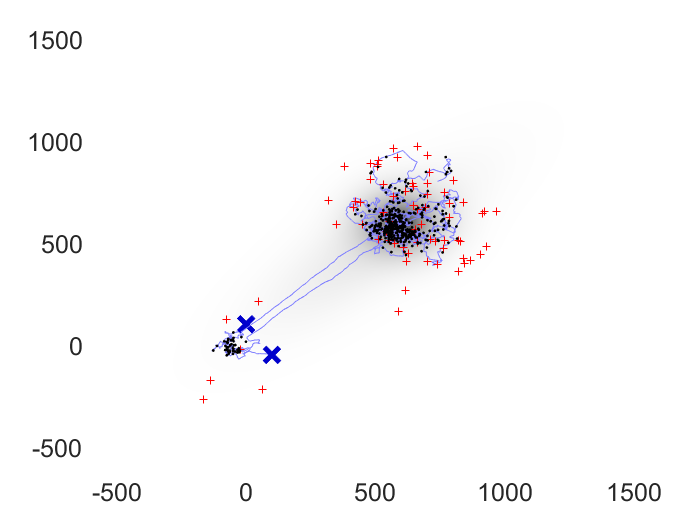}}

\caption{Snapshots of centralized multi-agent explorations of the time-invariant spatial distribution (a)-(c); and time-varying spatial distribution (d)-(f) }\label{fig: time-varying}
\end{figure*} 

For the time-invariant scenario in Fig. \ref{fig: time-varying} (a), both agents are closer to the bottom-left distribution and hence, they visit this region first. Although the sample point representation of the given spatial distribution does not change with time, the target distribution is spreading due to the random nature of target movements.
As the target movement speed is not fast enough, the majority of the targets, located in the bottom-left region, are detected by two agents as delineated in Fig. \ref{fig: time-varying} (b). 
However, the targets in the top-right region had enough time to spread out from the initial distribution when the two agents arrived at this region as illustrated in Fig. \ref{fig: time-varying} (c). Thus, most of the targets in this region were not detected by the agents.

To improve the performance of the exploration strategy, time-varying spatial distribution is considered for the second scenario. As discussed in Section \ref{sec: time-varying},  the sample points representing the spatial distribution are not stationary for this case, rather they also move in a random manner with the given random walk model. During the exploration, the sample points are updated at every time step using \eqref{eqn: sample point update}. As a result, the targets and spatial distributions diffuse at a similar rate.
Figs. \ref{fig: time-varying} (e) and (f) show that the agents detected more targets in the bottom-left as well as top-right distributions as compared to the time-invariant distribution case because of the capability of the proposed scheme to capture the spatio-temporal evolution of the reference PDF.

Fig. \ref{fig: error_bar for time-varying case} provides the detection rate comparison between the time-invariant and time-varying cases with a different diffusion rate constant for both cases.
A total of 10 simulations were conducted in each case as the random walk model led to a stochastic process in both sample and target points movement. 
The vertical line in each point of Fig. \ref{fig: error_bar for time-varying case} indicates a 95\% confidence interval. 
For all diffusion rate values, the time-varying case predominated in the target detection rate. This plot clearly shows that the proposed multi-robot exploration strategy is adaptable to time-varying scenarios and can be employed to many applications where the spatio-temporal evolution of the given distribution needs to be considered.


 

\begin{figure}[!h]
    \centering
    \includegraphics[scale=0.28]{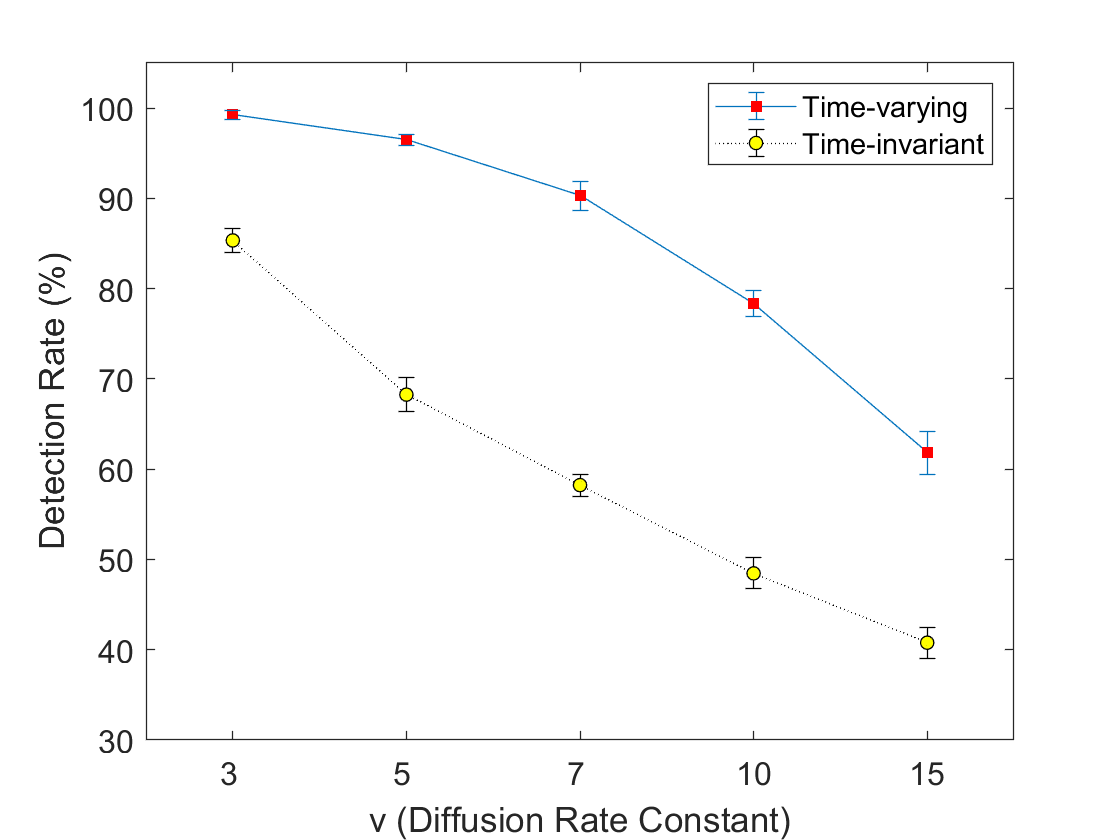}
    \caption{Error bar plot for the detection rate comparison between the time-invariant and time-varying cases; the vertical lines indicate 95\% confidence interval for 10 simulation data in each case}
    \label{fig: error_bar for time-varying case}
\end{figure}

\section{Conclusion}\label{sec: conclusion}
In this paper, we developed an efficient multi-robot exploration scheme, considering the robot energy constraint based on the optimal transport theory. In practice, all agents have finite energy, which needs to be reflected in the exploration plan. The total number of robot points is regarded as finite energy and hence, it can be incorporated into the OT-based multi-robot exploration scheme. Moreover, the proposed methods generate multi-robot trajectories while robot dynamics are separated from it. As a result, the proposed scheme enables a team of heterogeneous robots to be employed in various exploration missions, which broadens the applicability of the developed method. To quantify the performance of the multi-robot exploration scheme, the upper bound of the Wasserstein distance is developed, which can be used to monitor the efficiency of multi-robot explorations in real time. 
Both centralized and decentralized multi-robot exploration plans are proposed with the upper bound of the Wasserstein distance.
Various simulation results are presented for the centralized, decentralized, and time-varying scenarios to verify the effectiveness of the proposed methods.



\bibliographystyle{ieeetr}
\bibliography{OT_multi_robot_exp_journal}

\end{document}